\documentclass[letterpaper, DIV=17]{scrartcl}

\usepackage{amssymb, amsthm, mathtools,bbm}
\usepackage[square,sort,comma,numbers]{natbib}
\usepackage{comment}

\usepackage[none]{hyphenat}
\usepackage[english]{babel}
\usepackage{enumerate}
\usepackage{graphicx}%
\usepackage{subcaption}
\usepackage{dsfont}%
\usepackage{float}%
\usepackage{hyperref}
\usepackage{color}
\usepackage[utf8]{inputenc}
\usepackage{graphicx}
\usepackage{verbatim} 
	\usepackage{hyperref}
	\usepackage{comment}
	\usepackage{xfrac} 
	\usepackage{tensor} 
	\usepackage{listings}  

	\usepackage{pgfplots}              
	\pgfplotsset{compat=newest}
	\usepgfplotslibrary{fillbetween}

	\usepackage[title]{appendix}

\allowdisplaybreaks
	\numberwithin{equation}{section}
\newtheorem{theorem}{Theorem}[section]
\newtheorem{corollary}[theorem]{Corollary}
\newtheorem{proposition}[theorem]{Proposition}
\newtheorem{lemma}[theorem]{Lemma}
\theoremstyle{definition}

\theoremstyle{remark}

\newcommand{\y}{\mathbf{y}}
\newcommand{\R}{\mathbb{R}}

\newcommand{\N}{\mathbb{N}}
\newcommand{\E}{\mathbb{E}}
\newcommand{\Pp}{\mathbb{P}}
\newcommand{\one}{\mathbbm{1}}
\newcommand{\Tr}{\operatorname{Tr}}
\newcommand{\Var}{\operatorname{Var}}
\newcommand{\Ent}{\operatorname{Ent}}

\newcommand{\Av}{\operatorname{Av}}
\newcommand{\Ran}{\operatorname{Ran}}
\newcommand{\dd}{\,\mathrm{d}}
\newcommand{\HS}{\mathrm{HS}}
\newcommand{\tr}{\operatorname{tr}}
\newcommand{\bxi}{\boldsymbol{\xi}}
\newcommand{\bh}{\mathbf{h}}
\newcommand{\bw}{\mathbf{w}}
\newcommand{\norm}[1]{\left\lVert #1\right\rVert}

\newcommand{\inner}[2]{\left\langle #1,#2\right\rangle}
\renewcommand{\epsilon}{\varepsilon}

\begin{document}
		\title{Correlation inequalities for transversal field models with application to quantum glasses}
		
		\author{Chokri Manai$^{1,5}$ and Simone Warzel$^{2,3,4}$ \\
			\\
            \small $^1$ Courant Institute, New York University, USA \\[-.5ex]
			\small $^2$ Department of Mathematics, TU Munich, Garching, Germany \\[-.5ex]
			\small $^3$ Munich Center for Quantum Science and Technology, Munich, Germany \\[-.5ex]
			\small $^4$ Department of Physics, TU Munich, Garching Germany,\\[-.5ex]
			\small $^5$ Corresponding author: cm7411@nyu.edu}

        \date{\small \today \\[-.5cm]}
		\maketitle		
		
		\begin{abstract}
	   We study the class of transversal field models with pair interactions, including the quantum Curie-Weiss and the transversal field Ising model, as well as quantum spin glasses based on the Sherrington-Kirkpatrick (SK) and Edwards-Anderson models. Our main result is a general correlation inequality based on a Gaussian convolution estimate, which is derived from an extension of the Ding-Song-Sun inequality and an application of Brascamp-Lieb convexity techniques. The results apply for sufficiently strong transversal fields and, for quantum spin glasses, yield a high-field regime without replica order. Combined with the previously established low-field replica-symmetry-breaking regime, this gives distinct low- and high-field overlap phases. As a corollary of the logarithmic Sobolev inequalities for the underlying functional integrals derived here, we show the concentration of the self- and replica-overlap, and determine the high-field asymptotics of the pressure in the quantum SK model. \\
       
            \noindent
          {\small \textbf{Keywords:} Brascamp-Lieb, logarithmic Sobolev inequality, phase transition, spin glass, quantum Almeida-Thouless line \\
            \noindent
          \textbf{MSC:}  81S40; 82B44; 82D30}
		\end{abstract}
	
\bigskip		
\tableofcontents
\bigskip

\section{Introduction to  transverse-field models}

Transverse-field models form a broad and versatile class in statistical mechanics~\cite{BSS96,SIC13}. Within the larger zoo of possible quantum extensions of classical spin systems, adding a transverse field is arguably the simplest and most natural way: it creates quantum effects through noncommuting terms, represented on each spin-$\frac{1}{2}$ by the Pauli matrices,
$$
S^x \coloneqq \left( \begin{matrix}
    0 & 1 \\ 1 & 0 
\end{matrix}\right) , \quad S^y \coloneqq \left( \begin{matrix}
    0 & -i \\ i & 0 
\end{matrix}\right), \quad S^z \coloneqq \left( \begin{matrix}
    1 & 0 \\ 0 & -1 
\end{matrix}\right) ,
$$
while preserving much of the underlying classical interaction structure. 
Lifting the above matrices to the tensor product Hilbert space $$\mathcal{H}_N \coloneqq \bigotimes_{j=1}^N \mathbb{C}^2 $$ of $ N $-spin-$\frac{1}{2}$ by letting them act trivially in the other tensor components, i.e.\ $ S_j^x \coloneqq \mathbbm{1} \otimes \dots  \mathbbm{1} \otimes S^x \otimes \mathbbm{1}  \dots \otimes  \mathbbm{1}  $ for $ j \in \{ 1, \dots , N \} $, and similarly for the $y$- and $z$-Pauli matrices, the transversal-field Hamiltonians with pair interactions are of the form 
\begin{equation}\label{eq:Ham}
 H_N({J}, \mathbf{h}) \coloneqq - \frac{1}{2}\sum_{j,k=1}^N J_{j,k} S_j^z S_k^z - \sum_{j=1}^N h_j S_j^z - b \sum_{j=1}^N S_j^x , 
\end{equation}
and act on  $ \mathcal{H}_N $.
They are characterized by:
\begin{enumerate}
    \item an interaction matrix  ${J} \coloneqq  ( J_{j,k} )_{j,k = 1 , \dots , N} \geq 0 $, which we assume here to be positive semidefinite, 
    \item longitudinal fields $ \mathbf{h} = (h_1,\dots , h_N) \in \mathbb{R}^N $, and 
    \item a constant transversal field $ b > 0 $.
\end{enumerate}
These properties are assumed throughout this paper.
This class of models encompasses the transverse-field Ising model, its mean-field cousin, the quantum Curie–Weiss model, as well as quantum spin glasses such as the models based on the classical Edwards–Anderson glass, or its mean-field cousin, the  Sherrington-Kirkpatrick (SK) glass. \\

We will be interested in the thermodynamic properties of this broad class. They are described in terms of the normalized partition function, 
\begin{equation}\label{eq:partition}
 Z_N(J,\bh)
 =\frac{\Tr e^{-\beta H_N(J,\bh)}}{(2\cosh(\beta b))^N}
\end{equation}
at inverse temperature $ \beta > 0 $. The normalization is chosen such that at $ {J} = 0 $ and $ \mathbf{h} = 0 $ the partition function is one at all temperatures. The corresponding quantum Gibbs state is abbreviated by 
\begin{equation}\label{def:Gibbs}
 \langle (\cdot) \rangle_{J,\bh} \coloneqq \frac{\Tr(e^{-\beta H_N(J,\bh)} (\cdot) )}{\Tr e^{-\beta H_N(J,\bh)}}  .
\end{equation}
The positive-semidefinite assumption on $ J $ is less restrictive than it
first appears, because diagonal entries of the interaction only shift the energy and hence leave the Gibbs measure unchanged. We will use this trick in our applications to spin glasses. 

At vanishing longitudinal field $\bh=\mathbf{0}$, the Hamiltonian~\eqref{eq:Ham} is invariant under the global spin-flip unitary $U=\prod_{j=1}^N S_j^x$, which implements rotation by $\pi$ about the $x$-axis. Hence, the $z$-magnetization of each spin $ j \in \{ 1, \dots , N \} $  vanishes, 
\begin{equation}\label{eq:nohmag}
    \langle S^z_j\rangle_{{J},\mathbf{0}} = 0 ,
\end{equation}
as in the corresponding classical Ising-type model at $ b = 0 $.  
A common feature of the above models is the emergence of a paramagnetic phase for large enough transversal field $ b $ at $ \mathbf{h} = 0 $. Remarkably, this paramagnetic phase extends down to zero temperature ($\beta \to \infty)$, where it ends in a quantum phase transition of the ground state. The main aim of this paper is to derive general bounds on the location of the paramagnetic phase as well as inequalities on the longitudinal spin correlations and their concentration. These bounds will then be applied to the quantum SK model, where we establish the absence of replica-symmetry breaking for all temperatures at high enough transversal field.

\subsection{Correlation inequalities}

Our first main result controls longitudinal susceptibilities uniformly in $\bh$.
Here and below $\norm{J}$ denotes the operator norm of the interaction matrix.

\begin{theorem}[Uniform Hessian bound]\label{thm:Hessian}
Let $J\geq0$ and assume
\begin{equation}\label{eq:high-field-condition}
 b>\norm{J}\tanh(\beta b).
\end{equation}
Then, the bound on the Hessian 
\begin{equation}\label{eq:Hessian}
 D_{\bh}^2\log Z_N(J,\bh)
 \leq
 \frac{\beta\tanh(\beta b)}{b-\norm{J}\tanh(\beta b)}\,\one
\end{equation}
holds for every $\bh\in\R^N$ in the sense of quadratic forms.
\end{theorem}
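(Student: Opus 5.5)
The plan is to realize $Z_N(J,\bh)$ as a Gaussian convolution of a product of single-spin partition functions with imaginary-time-dependent longitudinal fields. The Hessian bound then splits into a single-site estimate (an extension of the Ding--Song--Sun inequality) and a Brascamp--Lieb type convexity argument for Gaussian convolutions.

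\textbf{Step 1: Gaussian representation.} Since $J\geq 0$, I use a Trotter/Feynman--Kac expansion of $e^{-\beta H_N}$ in the $S^z$-basis, with time $t\in[0,\beta]$. I then linearize the quadratic term $\frac{\beta}{2}\sum_{j,k} J_{j,k}\int S^z_j(t)S^z_k(t)\,dt/\beta$ by a centered Gaussian field $\boldsymbol\phi=(\phi_j(t))$ with covariance $\E[\phi_j(t)\phi_k(s)]=J_{j,k}\,\delta(t-s)$, suitably normalized. This yields
\begin{equation*}
Z_N(J,\bh)=\E_{\boldsymbol\phi}\Big[\exp\Big(\sum_{j=1}^N V\big(h_j+\phi_j(\cdot)\big)\Big)\Big].
\end{equation*}
Here $V(f)\coloneqq\log\big(\Tr\,\mathcal T\exp\int_0^\beta (f(t)S^z+bS^x)\,dt\,/\,2\cosh(\beta b)\big)$ is the normalized log-partition function of one spin in a time-dependent longitudinal field $f$. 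At finite Trotter step this is literally a finite-dimensional Gaussian integral, and everything passes to the limit at the end. The field $\bh$ enters only through shifts in the time-constant directions.

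\textbf{Step 2: single-site bound.} The second derivative of $V$ in $f$ is the truncated time-ordered correlation $\langle S^z(t);S^z(s)\rangle_f$. I claim that, as a quadratic form on $L^2([0,\beta])$ equipped with the measure $dt$,
\begin{equation*}
D_f^2 V\leq \kappa\,\one, \qquad \kappa\coloneqq\frac{\tanh(\beta b)}{b},
\end{equation*}
uniformly in $f$. At $f=0$ this is sharp: the constant mode gives exactly $\beta\tanh(\beta b)/b$ when paired with $\beta^{-1/2}$-normalized constants. For general $f$, I would prove it in the spirit of Ding--Song--Sun. I reduce to a one-dimensional jump process for $S^z$ (flip rate $b$) with Girsanov weight $e^{\int f\sigma}$, and show that tilting by $f$ can only decrease the covariance operator of $\sigma(t)$. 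I would first try an FKG/Griffiths-type argument together with an explicit computation of the $f=0$ resolvent, $(-\partial_t^2+4b^2)^{-1}$-like with periodic boundary conditions. This uniform-in-$f$ operator bound is the main obstacle, since for nonconstant $f$ there is no closed form.

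\textbf{Step 3: Brascamp--Lieb for Gaussian convolutions.} Let $C$ be the covariance of $\boldsymbol\phi$, so that $\norm{C}=\norm{J}$ on $L^2$, and let $\Phi(\mathbf x)=\sum_j V(x_j)$. By Step 2, $D^2\Phi\leq K\coloneqq\kappa\one$, and $\norm{C}\,\kappa<1$ is exactly \eqref{eq:high-field-condition}. The function $\Phi-\frac12\langle \mathbf x,K\mathbf x\rangle$ is concave. The standard Pr\'ekopa/Brascamp--Lieb argument (write the Gaussian density times $e^{\frac12\langle x,Kx\rangle}$ as a new Gaussian with covariance $(C^{-1}-K)^{-1}$ and use that marginals of log-concave measures are log-concave) then gives, for the convolution $G(\mathbf y)=\log\E[e^{\Phi(\mathbf y+\boldsymbol\phi)}]$,
\begin{equation*}
D^2 G\leq K(\one-CK)^{-1}\leq \frac{\kappa}{1-\norm{J}\kappa}\,\one .
\end{equation*}

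\textbf{Step 4: conclusion.} I restrict $G$ to time-constant shifts $\mathbf y=\bh$. The Hessian in $\bh$ is the quadratic form of $D^2G$ on vectors $v_j(t)=v_j$, whose $L^2$ norm squared is $\beta|\mathbf v|^2$. Hence
\begin{equation*}
D_\bh^2\log Z_N\leq \frac{\beta\kappa}{1-\norm{J}\kappa}\,\one=\frac{\beta\tanh(\beta b)}{b-\norm{J}\tanh(\beta b)}\,\one,
\end{equation*}
which is \eqref{eq:Hessian}. All bounds are uniform in the Trotter discretization, so they survive the limit.
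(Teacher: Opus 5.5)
Your architecture is the paper's. You use a Hubbard--Stratonovich linearization with covariance $J\otimes\one$ and a uniform single-site Hessian bound with constant $\kappa=\tanh(\beta b)/b$ (the paper's $\gamma=\beta\tanh(\beta b)/b$ after rescaling imaginary time to $[0,1]$). You then apply the convolution bound $D^2G\leq\kappa(\one-\kappa C)^{-1}$ and restrict to time-constant shifts. Deriving Step~3 from Pr\'ekopa's theorem instead of the Brascamp--Lieb variance inequality (the paper's Lemma~\ref{lem:gaussian-convolution}) is a legitimate variant and gives the same bound. You only need to regularize $J\to J+\epsilon\one$ so that $C^{-1}$ exists.

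The genuine gap is Step~2. It carries all of the quantum input, you leave it open, and the tools you propose for it do not suffice.
\begin{itemize}
\item FKG only gives $\mu_f(t,s)\geq0$.
\item Griffiths/GHS-type monotonicity of truncated two-point functions in the field needs a field of fixed sign. Here $f=h_j+\phi_j(\cdot)$ is a Gaussian field taking both signs, so no comparison with $f=0$ follows.
\end{itemize}
The missing ingredient is the Ding--Song--Sun inequality itself, which holds for ferromagnetic Ising models with \emph{arbitrary} inhomogeneous real fields. Applied to the Trotterized periodic (ferromagnetic) chain, it gives the entrywise bound $0\leq\mu_f(t,s)\leq\mu_0(t,s)$, cf.\ Proposition~\ref{prop:DSS}. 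This orders kernels, not operators, so ``tilting decreases the covariance operator'' is not what you obtain. It is positivity of the kernel that converts it into the norm bound you need:
\[
\inner{\varphi}{\mu_f\varphi}\leq\inner{|\varphi|}{\mu_f|\varphi|}\leq\inner{|\varphi|}{\mu_0|\varphi|}\leq\norm{\mu_0}\norm{\varphi}_2^2 .
\]
The value $\norm{\mu_0}=\tanh(\beta b)/b$ then follows from the Schur test. Equivalently, it follows from your resolvent picture, since $\mu_0=4b\tanh(\beta b)\,(-\partial_t^2+4b^2)^{-1}$ on the circle of length $\beta$.

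There are also two smaller slips.
\begin{itemize}
\item With $L^2(\dd t)$-normalized constants, the $f=0$ form equals $\tanh(\beta b)/b$, not $\beta\tanh(\beta b)/b$; the latter is the value for the constant function $1$.
\item With $M$ Trotter slices, the zero-field chain kernel has norm $\frac{\tanh(\beta b)}{b}\,x\coth x$ with $x=\beta b/M$. This strictly exceeds $\tanh(\beta b)/b$, so your bounds are only asymptotically uniform in the discretization, not uniform as claimed. This is harmless because \eqref{eq:high-field-condition} is strict. The paper avoids the issue by discretizing only the interaction through the projection $Q_D$, for which $\norm{Q_D\mu_fQ_D}\leq\norm{\mu_0}$ holds exactly.
\end{itemize}
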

The proof is found in Subsection~\ref{sub:Hessian}. 
To spell out the consequence, we recall \cite{Simon_2025} the Duhamel-Kubo
inner product of two operators $ A, B $ on $\mathcal{H}_N $: 
\begin{equation}\label{eq:Duhamel-def}
 (A,B)_{J,\bh}
 :=\frac{1}{\Tr e^{-\beta H_N}}
 \int_0^1\Tr\!\left(e^{-(1-s)\beta H_N}A^*e^{-s\beta H_N}B\right)\dd s,
\end{equation}
where $H_N \equiv H_N(J,\bh)$.  The Duhamel-Kubo inner product is related to the second derivatives of the partition function. In particular for $B_{\mathbf v}=\sum_jv_jS_j^z$ and
$\widetilde B_{\mathbf v} \coloneqq B_{\mathbf v}-\langle B_{\mathbf v}\rangle_{J,\bh}$, differentiation gives
\begin{equation}\label{eq:Hessian-Duhamel}
 \inner{\mathbf v}{D_{\bh}^2\log Z_N(J,\bh)\mathbf v}
 =\beta^2(\widetilde B_{\mathbf v},\widetilde B_{\mathbf v})_{J,\bh}.
\end{equation}
For a self-adjoint observable $A$, the Falk--Bruch inequality
\cite{FB69,DLS78} in the convenient form
\[
 \langle A^2\rangle
 \leq (A,A) +
 \frac12\sqrt{(A,A) \,
 \langle[A,[\beta H_N,A]]\rangle}
\]
and the identity
\[
 [B_{\mathbf v},[\beta H_N,B_{\mathbf v}]]
 =4\beta b\sum_{j=1}^Nv_j^2S_j^x
\]
together with Theorem~\ref{thm:Hessian} and the bound $\langle S_j^x\rangle\leq1$ imply
\begin{align}
 \big\langle\widetilde B_{\mathbf v}^{\,2}\big\rangle_{J,\bh}
 \leq & \mathfrak m(\beta,b;\norm{J})\norm{\mathbf v}_2^2 \notag \\
&  \quad \mbox{with} \;   \mathfrak m(\beta,b;\norm{J}) :=m(\beta,b;\norm{J})+\sqrt{\beta b\,m(\beta,b;\norm{J})}, \label{def:mathfrakm} \\
&  \quad \mbox{and} \; m(\beta,b;\norm{J}) \coloneqq \frac{\tanh(\beta b)}{\beta\bigl(b-\norm{J} \tanh(\beta b)\bigr)} . \label{def:m}
\end{align}
In particular, if \eqref{eq:high-field-condition} applies, we conclude that 
\begin{equation}\label{eq:magnetization-variance}
 \Bigg\langle\Big(\frac1N\sum_{j=1}^N
 (S_j^z-\langle S_j^z\rangle_{J,\bh})\Big)^2\Bigg\rangle_{J,\bh}
 \leq\frac{\mathfrak m(\beta,b;\norm{J})}{N}.
\end{equation}
At $ \mathbf{h} = 0 $ where~\eqref{eq:nohmag} holds, this establishes the vanishing of the specific $ z$-magnetization in the thermodynamic limit $ N \to \infty $. 
More generally, the correlation matrix $ C(J,\bh) $ defined by 
\begin{equation}\label{def:cormat}
 C_{jk}(J,\bh)
 :=\left\langle
 (S_j^z-\langle S_j^z\rangle)
 (S_k^z-\langle S_k^z\rangle)
 \right\rangle_{J,\bh}, 
\end{equation}
which is real and positive semidefinite, satisfies the following norm bounds. 

\begin{corollary}[Correlation-matrix bound]\label{cor:correlation-matrix}
If \eqref{eq:high-field-condition} holds, then uniformly in $\bh$,
\begin{equation}\label{eq:Cor}
 \norm{C(J,\bh)}\leq\mathfrak m(\beta,b;\norm{J}),
 \qquad
 \frac1N\sum_{j,k=1}^N C_{jk}(J,\bh)^2
 \leq\mathfrak m(\beta,b;\norm{J})^2.
\end{equation}
\end{corollary}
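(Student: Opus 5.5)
The plan is to read both bounds off the variance estimate stated just before the corollary. Fix $\mathbf v\in\R^N$, so that $\widetilde B_{\mathbf v}=\sum_j v_j(S_j^z-\langle S_j^z\rangle_{J,\bh})$. Since the operators $S_j^z$ commute with one another, expanding the square gives
\[
\big\langle \widetilde B_{\mathbf v}^{\,2}\big\rangle_{J,\bh}=\sum_{j,k=1}^N v_jv_k\,C_{jk}(J,\bh)=\inner{\mathbf v}{C(J,\bh)\mathbf v}.
\]
Here $C$ is real symmetric because the $S^z_j$ commute and are self-adjoint; its realness is already noted in the text. The inequality preceding \eqref{eq:magnetization-variance} then gives $\inner{\mathbf v}{C\mathbf v}\leq\mathfrak m(\beta,b;\norm{J})\norm{\mathbf v}_2^2$ for every $\mathbf v$, uniformly in $\bh$, whenever \eqref{eq:high-field-condition} holds. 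Because $C\geq0$, its operator norm equals $\sup_{\norm{\mathbf v}_2=1}\inner{\mathbf v}{C\mathbf v}$, and the first bound in \eqref{eq:Cor} follows.

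For the Hilbert--Schmidt bound I will write $\sum_{j,k}C_{jk}^2=\Tr C^2$ and use $\Tr C^2\leq\norm{C}\,\Tr C$. This holds since $C\geq0$: in its eigenbasis the left side is $\sum\lambda_i^2\le(\max\lambda_i)\sum\lambda_i$. Each diagonal entry satisfies $C_{jj}=\langle (S_j^z)^2\rangle-\langle S_j^z\rangle^2=1-\langle S_j^z\rangle^2\leq1$, so $\Tr C\le N$. This gives $\frac1N\sum_{j,k}C_{jk}^2\leq\norm{C}$, which is the stated bound provided $\mathfrak m\geq1$.

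The hard step is whether $\mathfrak m\ge 1$ really holds, since otherwise the diagonal route loses. Taking the unit vector $\mathbf v=e_j$ in the first part also gives $C_{jj}\le\mathfrak m$, so $\Tr C\le N\min(1,\mathfrak m)\le N\mathfrak m$ regardless. Hence $\frac1N\Tr C^2\le \norm C\cdot\mathfrak m\le\mathfrak m^2$ without any case distinction. Once this is in place the whole argument is elementary, and the only real input is the variance bound already derived from Theorem~\ref{thm:Hessian} together with the Falk--Bruch inequality.
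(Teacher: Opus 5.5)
Your argument is correct and matches the paper's proof. The first bound follows from $\langle\widetilde B_{\mathbf v}^{\,2}\rangle_{J,\bh}=\inner{\mathbf v}{C\mathbf v}\le\mathfrak m\norm{\mathbf v}_2^2$ together with $C\geq0$, and the second from the eigenvalue bound $\Tr C^2\le N\norm{C}^2$, which the paper states in one line. Your detour through $\Tr C^2\le\norm{C}\Tr C$ is equivalent, because $\Tr C\le N\norm{C}$ holds directly, so the worry about whether $\mathfrak m\geq1$ was unnecessary. Using $C_{jj}\le1$ does, however, give you the slightly sharper bound $\frac1N\sum_{j,k}C_{jk}^2\le\mathfrak m\min\{1,\mathfrak m\}$.
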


\begin{proof}
For every $\mathbf v$, the quadratic form $\mathbf v^T C\mathbf v$ is the variance appearing in the estimate preceding
\eqref{def:mathfrakm}, such that $\norm{C}\leq\mathfrak m$.  Since $C\geq0$,
$\norm{C}_{\HS}^2\leq N\norm{C}^2$.
\end{proof}

The above results in particular apply to the following popular examples.
\begin{description}
\item[Quantum Curie--Weiss model]
For $J_{jk}=N^{-1}$, one has $J\geq0$ and $\norm{J}=1$.  At zero longitudinal
field, the condition $b>\tanh(\beta b)$ agrees with the exact paramagnetic
criterion for the quantum Curie--Weiss model \cite{Chayes:2008aa} (see also~\cite{MW23b,Manai25}).
This demonstrates that  the condition~\eqref{eq:high-field-condition} is optimal in this generality. 
\item[Transverse-field Ising model]\label{ex:Ising}
Let $A$ be the adjacency matrix of a finite subgraph of $\mathbb Z^d$.  Since
$\norm{A}\leq2d$, the shifted matrix $J=A+2d\one$ is positive semidefinite with
$\norm{J}\leq4d$.  It defines the same Gibbs
measure as $A$.  Hence all bounds above hold whenever
$b>4d\tanh(\beta b)$, in particular for every temperature if $b>4d$.
This slightly improves the transversal field regime in which stronger spatial Ornstein--Zernike estimates have been derived in~\cite{Kenn91}.
\end{description}
Applications to quantum spin glasses will be discussed in Section~\ref{sec:applications}. 

\subsection{Functional integral representation}
Our proof of Theorem~\ref{thm:Hessian} is based on the functional 
integral representation of the normalized partition function~\eqref{eq:partition},
and the corresponding Gibbs state~\eqref{def:Gibbs}, which we discuss next.

Let $\nu_1\equiv\nu_1^{\beta b}$ be the following probability measure on
c\`adl\`ag paths $\xi:[0,1]\to\{-1,1\}$ which are periodic, $ \xi(0) = \xi(1) $.  We choose $\sigma\in\{-1,1\}$ uniformly and independently sample a Poisson point process $\omega$ on $[0,1]$ with intensity $\beta b$, conditioned to contain an even number of points. The path then arises by flipping the sign of $\xi$ at every Poisson point, $\xi(s)\coloneqq\sigma(-1)^{\omega([0,s])}$.  The even-parity conditioning makes the path
periodic.  For the partition function, we need the $ N $-fold product $\nu_N=\nu_1^{\otimes N}$.  The standard Trotter/Feynman--Kac
representation (see e.g.\ \cite[App.~B]{Leschke:2021aa}), gives
\begin{align}
 Z_N(J,\bh)
 &=\int \exp\left\{
 \frac\beta2\int_0^1\sum_{j,k=1}^NJ_{jk}\xi_j(s)\xi_k(s)\dd s
 +\beta\int_0^1\sum_{j=1}^Nh_j\xi_j(s)\dd s
 \right\}\nu_N(\dd\bxi), \notag \\
 \langle (\cdot) \rangle_{J,\bh}
 &=\frac{1}{Z_N(J,\bh)}\int (\cdot) \exp\left\{
 \frac\beta2\int_0^1\sum_{j,k=1}^NJ_{jk}\xi_j(s)\xi_k(s)\dd s
 +\beta\int_0^1\sum_{j=1}^Nh_j\xi_j(s)\dd s
 \right\} \nu_N(\dd\bxi).
 \label{eq:path-Gibbs}
\end{align}
For one spin and zero longitudinal field, the correlation function is known explicitly (cf.~\cite{Leschke:2021aa})
\begin{equation}\label{eq:free-loop-covariance}
 \int\xi(t)\xi(s)\nu_1(\dd\xi)
 =\frac{\cosh\!\bigl(\beta b(1-2|t-s|)\bigr)}{\cosh(\beta b)} =: \mu_0(t,s) .
\end{equation}
This kernel defines a non-negative Hilbert-Schmidt operator on the Hilbert space $ L^2(0,1) $.

\subsection{Self-overlap}

The path self-overlap in the above functional integral is the Hilbert--Schmidt kernel
\begin{equation}\label{def:so}
 S_N(t,s)=\frac1N\sum_{j=1}^N\xi_j(t)\xi_j(s)
 .
\end{equation}
It defines a positive operator $ S_N  = \frac1N\sum_{j=1}^N|\xi_j\rangle\langle\xi_j| $ of rank at most $N$ on  $L^2(0,1)$, and its Hilbert-Schmidt norm is bounded, 
$\norm{S_N}_{\HS}^2 \coloneqq \int_0^1\int_0^1  S_N(t,s)^2 dt ds  \leq1$. In the high-field regime, the self-overlap concentrates on its Gibbs average. 
\begin{theorem}[Self-overlap concentration]\label{thm:concentration1}
Let $J\geq0$ and suppose \eqref{eq:high-field-condition} holds.
Then, for every $\bh\in\R^N$,
\begin{equation}\label{eq:self-variance}
 \left\langle\norm{S_N-\langle S_N\rangle_{J,\bh}}_{\HS}^2\right\rangle_{J,\bh}
 \leq\frac{1+r(\beta,b;\norm{J})}{N},
\end{equation}
and, for every $\epsilon>0$,
\begin{equation}\label{eq:soconc}
 \left\langle\one\!\left\{
 \norm{S_N-\langle S_N\rangle_{J,\bh}}_{\HS}
 >\epsilon+\sqrt{\frac{r(\beta,b;\norm{J})}{N}}
 \right\}\right\rangle_{J,\bh}
 \leq
 2\exp\left[-\frac{N\epsilon\min\{\epsilon,\sigma_{\mathrm s}^2\}}
 {4\sigma_{\mathrm s}^2}\right], 
\end{equation}
where
$\displaystyle \; 
 \sigma_{\mathrm s}^2=e+r(\beta,b;\norm{J}) , \quad r(\beta,b;\norm{J}) \coloneqq \frac{\norm{J}\tanh(\beta b)}{b-\norm{J}\tanh(\beta b)} $. \\
At zero longitudinal field, we also have
\begin{equation}\label{eq:sharp-self-bound}
 \norm{\langle S_N\rangle_{J,0}}_{\HS}^2
 \leq\frac12m(\beta,b;\norm{J})+m(\beta,b;\norm{J})^2.
\end{equation}
\end{theorem}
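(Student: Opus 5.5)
The plan is to linearize the pair interaction in the functional integral \eqref{eq:path-Gibbs} by a Gaussian (Hubbard--Stratonovich) convolution. Since $J\geq0$, I write
$$\exp\Big\{\tfrac\beta2\int_0^1\sum_{j,k}J_{jk}\xi_j(s)\xi_k(s)\dd s\Big\}=\E_{\boldsymbol\phi}\Big[\exp\Big\{\int_0^1\sum_j\phi_j(s)\xi_j(s)\dd s\Big\}\Big],$$
where $\boldsymbol\phi$ is a centered Gaussian field with covariance $\beta J\otimes\delta(t-s)$. This exhibits the Gibbs measure as a mixture: given $\boldsymbol\phi$, the paths $\xi_1,\dots,\xi_N$ are independent, each distributed as $\nu_1$ tilted by the time-dependent field $\beta h_j+\phi_j(\cdot)$. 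The mixing law of $\boldsymbol\phi$ has the density
$$\exp\Big\{-\tfrac12\langle\boldsymbol\phi,(\beta J)^{-1}\boldsymbol\phi\rangle+\sum_j\log z_1(\beta h_j+\phi_j)\Big\},$$
with $z_1$ the single-spin functional partition function. The key input is a single-spin bound on the Hessian of $\log z_1$ in the field variable, uniformly over all time-dependent fields. This is the extended Ding--Song--Sun step that I take from the proof of Theorem~\ref{thm:Hessian}: the Hessian is at most $\beta\tanh(\beta b)/b$ times the identity, in the normalization where $(\beta J)^{-1}\geq(\beta\norm{J})^{-1}$. Under \eqref{eq:high-field-condition} the mixing density is then strongly log-concave, with Hessian of $-\log(\text{density})$ bounded below by $\kappa:=\beta^{-1}\big(\norm{J}^{-1}-\tanh(\beta b)/b\big)$. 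By Brascamp--Lieb / Bakry--\'Emery, the $\boldsymbol\phi$-law satisfies a Poincar\'e inequality and a log-Sobolev inequality with constant $\kappa^{-1}$.

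For \eqref{eq:self-variance}, I decompose $F=\norm{S_N-\langle S_N\rangle}_{\HS}^2$ via the law of total variance, in the Hilbert space of HS kernels:
$$\langle F\rangle=\E_{\boldsymbol\phi}\big[\text{conditional variance}\big]+\E_{\boldsymbol\phi}\norm{G(\boldsymbol\phi)-\E G}_{\HS}^2,\qquad G(\boldsymbol\phi):=\E[S_N\mid\boldsymbol\phi].$$
The first term is at most $1/N$. Conditionally, $S_N$ is an average of $N$ independent terms $|\xi_j\rangle\langle\xi_j|$, each of HS norm $1$. For the second term I apply the Poincar\'e inequality to $G$. The derivative of $G$ with respect to $\phi_j$ equals $N^{-1}$ times a single-spin mixed covariance of $\xi_j\otimes\xi_j$ with $\xi_j(u)$. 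Bounding it by Cauchy--Schwarz, together with the single-spin Hessian bound $\beta\tanh(\beta b)/b$, should give $\norm{\nabla G}^2\lesssim N^{-1}\tanh(\beta b)/(\beta b)$. Multiplying by $\kappa^{-1}$ gives exactly $r(\beta,b;\norm{J})/N$. This gradient estimate, with the precise constant, is the step I expect to be the main obstacle.

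For the tail bound \eqref{eq:soconc}, I split
$$\norm{S_N-\langle S_N\rangle}\leq\norm{S_N-G}+\norm{G-\E G}.$$
I control $\norm{G-\E G}$ by its $L^2$ size $\sqrt{r/N}$, using concentration of the Lipschitz function $G$ under the log-Sobolev measure (Herbst). The conditional term $\norm{S_N-G}$ is a norm of a sum of bounded independent Hilbert-space vectors, so a Bernstein/Pinelis-type inequality gives sub-exponential tails at scale $1/N$. Combining the two via a joint exponential-moment (Herbst) bound for the full mixture produces the variance proxy $\sigma_{\mathrm s}^2=e+r$ and the $\min\{\epsilon,\sigma_{\mathrm s}^2\}$ form. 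The constant $e$ comes from bounding the exponential moment of the bounded conditional part.

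For \eqref{eq:sharp-self-bound} at $\bh=0$, I view $\langle S_N\rangle_{J,0}$ as a positive trace-class operator of trace $1$. Hence
$$\norm{\langle S_N\rangle}_{\HS}^2\leq\norm{\langle S_N\rangle}\cdot\Tr\langle S_N\rangle=\norm{\langle S_N\rangle}.$$
For $f\in L^2(0,1)$, the form $\inner{f}{\langle S_N\rangle f}=N^{-1}\sum_j\langle(\int f\xi_j)^2\rangle$ is a susceptibility with respect to the time-dependent longitudinal field $\sum_j f(s)\xi_j(s)$. I would bound it by the time-dependent analogue of Theorem~\ref{thm:Hessian}, which the Gaussian-convolution proof gives directly. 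That analogue controls $\beta^{-2}$ times the Hessian by $m$, plus a rank-one/diagonal correction, and the correction together with averaging over the two symmetric contributions should produce the $\tfrac12 m+m^2$ shape. I am least sure about the exact constant in this last step.
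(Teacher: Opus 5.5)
Your argument for \eqref{eq:self-variance} and \eqref{eq:soconc} is essentially the paper's. It uses Gaussian linearization, the Ding--Song--Sun bound, Bakry--\'Emery/Brascamp--Lieb for the outer field, and a conditional-variance split with Poincar\'e applied to the conditional mean. The tail bound then comes from Herbst for the outer part combined with Pinelis for the conditionally independent inner part. The gradient estimate you flag as the main obstacle is the cross-covariance bound obtained by Cauchy--Schwarz: $\norm{T_j}_{\HS}^2$ is at most the norm of the conditional covariance of $\xi_j$ times $\E\norm{X_j-\E X_j}_{\HS}^2\leq 1$.

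A few technical points need fixing. A field with covariance $\beta J\otimes\delta(t-s)$ is white noise in time. A log-concave density and Brascamp--Lieb therefore only make sense after a time discretization; the paper projects onto step functions at scale $2^{-D}$ and removes the cutoff afterwards. $(\beta J)^{-1}$ needs $J>0$, which is handled by $J\to J+\eta\one$. In your $\boldsymbol\phi$-variables the single-spin Hessian bound is $\tanh(\beta b)/(\beta b)$, not $\beta\tanh(\beta b)/b$; your $\kappa$ is consistent with the former.

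The genuine gap is in \eqref{eq:sharp-self-bound}. Set $A=\langle S_N\rangle_{J,0}$, so $\Tr A=1$, and the time-dependent Hessian bound does give $\norm{A}\leq m$. Then $\norm{A}_{\HS}^2\leq\norm{A}\Tr A$ yields only $\norm{A}_{\HS}^2\leq m$. For large $b$ this is $\approx 1/(\beta b)$, twice the claimed leading term. Since $m\leq\tfrac12 m+m^2$ fails whenever $m<1/2$, the statement is not proved in the relevant regime.

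No ``rank-one/diagonal correction'' can repair this. At $\bh=0$ all means vanish. The only information you use (positivity, trace one, operator norm at most $m$) is compatible with a flat spectrum of about $1/m$ eigenvalues equal to $m$, for which $\norm{A}_{\HS}^2=m$. The loss is real already for one free spin: there $\norm{\mu_0}\Tr\mu_0=\tanh(\beta b)/(\beta b)$, while $\norm{\mu_0}_{\HS}^2=\tanh(\beta b)/(2\beta b)+1/(2\cosh^2(\beta b))$.

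What is missing is information about the shape of the spectrum, which the paper extracts from the KMS condition.
\begin{itemize}
\item By time-translation invariance, $A$ is a convolution operator with kernel $C_N(t-s)$.
\item The KMS spectral representation gives $C_N(u)=\int_0^\infty K_x(u)\,\rho_N(\dd x)$, with $K_x(u)=\cosh(x(1/2-u))/\cosh(x/2)$ and $\rho_N$ a probability measure.
\item Theorem~\ref{thm:Hessian} via the Duhamel identity gives $q:=\int_0^1C_N(u)\dd u=\frac1N\sum_j(S_j^z,S_j^z)_{J,0}\leq m$. This $q$ is the top Fourier mode, hence equals your $\norm{A}$.
\item The extra ingredient is the elementary estimate $\int_0^1K_xK_y\,\dd u\leq\frac14\bigl(c(x)+c(y)\bigr)+c(x)c(y)$, where $c(x)=\int_0^1K_x\,\dd u=2\tanh(x/2)/x$.
\item Integrating this against $\rho_N\otimes\rho_N$ gives $\norm{A}_{\HS}^2=\int_0^1C_N(u)^2\dd u\leq\tfrac q2+q^2\leq\tfrac12 m+m^2$.
\end{itemize}
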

The proof is found in Subsection~\ref{sec:self-proof}.\\

For fixed $\beta$ and $\norm{J}$, the right-hand side of
\eqref{eq:sharp-self-bound} is $(2\beta b)^{-1}+O(b^{-2})$ as $b\to\infty$.
The leading constant is optimal already for one free spin since by
\eqref{eq:free-loop-covariance},
\begin{equation}\label{eq:one-spin-HS}
 \norm{\langle S_1\rangle_{0,0}}_{\HS}^2
 = \norm{\mu_0}_{\HS}^2
 =\frac{\tanh(\beta b)}{2\beta b}
  +\frac{1}{2\cosh^2(\beta b)}
 =\frac{1}{2\beta b}+o(b^{-1})
\end{equation}
where the remainder $ o(b^{-1})$ goes to zero as $ b \to \infty $.
\subsection{Replica overlap and its centered version}
In the theory of spin glasses, the second important kernel is the replica overlap, 
\begin{equation}
    R_N^{(1,2)}(t,s) \coloneqq \frac1N\sum_{j=1}^N \xi_j^{(1)}(t)\xi_j^{(2)}(s) . 
\end{equation}
It involves two copies, $(\xi_j^{(1)}, \xi_j^{(2)}) $, of the Gibbs average, and we will denote by $ \langle (\cdot) \rangle_{J,\mathbf{h}}^\otimes $ the duplicated (normalized) Gibbs average. Related, yet different, is the centered version of the replica overlap
\begin{equation}\label{def:replica}
 C_N^{(1,2)}(t,s)
 =\frac1N\sum_{j=1}^N
 \bigl(\xi_j^{(1)}(t)-\bar\xi_j\bigr)
 \bigl(\xi_j^{(2)}(s)-\bar\xi_j\bigr), \qquad  \bar\xi_j \coloneqq \langle\xi_j(t)\rangle_{J,\bh}=\langle S_j^z\rangle_{J,\bh} .
\end{equation}
It depends on $ J, \bh $, which we suppress in the notation, and the centering $\bar\xi_j  $ is independent of time due to the time-invariance of the Gibbs measure.
Under the duplicated Gibbs measure, the average of the Hilbert-Schmidt norm of the centered replica overlap $  C_N^{(1,2)} $ is related to the correlation matrix defined in~\eqref{def:cormat}. In fact, for any $ t, s \in [0,1) $ one has:
\begin{align}\label{eq:orderp}
	 \big\langle C_N^{(1,2)}(t,s)^2\big\rangle_{J,\mathbf{h}}^\otimes  & =   \frac{1}{N^2} \sum_{j,k=1}^N    \left\langle \left( S_j^z(t) - \langle S_j^z(t)\rangle_{J,\mathbf{h}} \right) \left( S_k^z(t) - \langle S_k^z(t)\rangle_{J,\mathbf{h}}\right)\rangle_{J,\mathbf{h}} \right.  \notag \\
    & \mkern200mu \times \left\langle \left( S_j^z(s) - \langle S_j^z(s)\rangle_{J,\mathbf{h}} \right) \left( S_k^z(s) - \langle S_k^z(s)\rangle_{J,\mathbf{h}}\right)\right\rangle_{J,\mathbf{h}}  
   \notag \\ 
	& =    \frac{1}{N^2} \sum_{j,k=1}^N    \langle \left( S_j^z - \langle S_j^z \rangle_{J,\mathbf{h}}\right) \left( S_k^z -  \langle S_k^z \rangle_{J,\mathbf{h}}\right) \rangle_{J,\mathbf{h}}^2 = \frac{1}{N^2} \sum_{j,k=1}^N  C_{jk}(J,\mathbf{h})^2 .
\end{align}
In the first step, we translated the functional integral expectations to Gibbs expectations of operators $ S_j^z(t) \coloneqq e^{t \beta H_N}S_j^z e^{-t \beta H_N} $. 
The second step follows from the time-translation invariance of the equal-time Gibbs correlations.

At $\bh=0$, the centered overlap \eqref{def:replica} coincides with the usual
replica overlap due to the global spin-flip symmetry~\eqref{eq:nohmag}. More generally, we have 
$$
R_N^{(1,2)}(t,s) - \mathfrak{r}_N^2 =  C_N^{(1,2)}(t,s) + \frac{1}{N} \sum_{j=1}^N \left(\xi_j^{(1)}(t) - \bar\xi_j \right) \bar\xi_j + \frac{1}{N} \sum_{j=1}^N \left(\xi_j^{(2)}(s) - \bar\xi_j \right) \bar\xi_j ,
$$
with $ \mathfrak{r}_N^2 \coloneqq \langle R_N^{(1,2)}(t,s) \rangle_{J,\bh}^\otimes = \frac{1}{N} \sum_{j=1}^N (\bar\xi_j)^2 $ 
such that 
\begin{align}
  \left\langle \norm{R_N^{(1,2)}- \mathfrak{r}_N^2}_{\HS}^2 \right\rangle_{J,\bh}^\otimes
  &=\frac{1}{N^2} \sum_{j,k=1}^N \bigl( C_{jk}(J,\mathbf{h})^2  +2\bar\xi_j \ \bar\xi_k \ C_{jk}(J,\mathbf{h}) \bigr)
  \label{eq:quenched-variance}
\end{align}

 If $ b > \| J \| \tanh(\beta b) $, Corollary~\ref{cor:correlation-matrix} shows that 
$ \big\langle \big\| C_N^{(1,2)} \big\|_\HS^2\big\rangle_{J,\mathbf{h}}^\otimes \leq \mathfrak{m}(\beta,b;\norm{J})^2 \ N^{-1} $, 
 i.e., the centered replica overlap vanishes in the thermodynamic limit, and so does $ \big\langle \big\| R_N^{(1,2)}- \mathfrak{r}_N^2\big\|_{\HS}^2 \big\rangle_{J,\bh}^{\otimes}  $.  In case $ \bh = 0 $ where~\eqref{eq:nohmag} holds, one has $\bar\xi_j=0$ for all $j$, and hence $\mathfrak r_N^2=0$. By~\eqref{eq:orderp} and~\eqref{eq:quenched-variance},  this then implies that the non-centered replica overlap vanishes as well in this regime. 
 
Our next result shows that the centered replica overlap concentrates under the duplicated Gibbs measure.

\begin{theorem}[Replica-overlap concentration]\label{thm:repconc}
Under the assumptions of Theorem~\ref{thm:concentration1}, set
$
 \sigma_{\mathrm r}^2=16e+8 \ r(\beta,b;\norm{J}) $. 
Then, for every $\epsilon>0$,
\begin{equation}\label{eq:reconc}
 \left\langle\one\!\left\{
 \norm{C_N^{(1,2)}}_{\HS}
 >\epsilon+\frac{\mathfrak m(\beta,b;\norm{J})}{\sqrt N}
 \right\}\right\rangle_{J,\bh}^{\otimes}
 \leq
 2\exp\left[-\frac{N}{16}
 \min\left\{\epsilon,\frac{4\epsilon^2}{\sigma_{\mathrm r}^2}\right\}\right].
\end{equation}
\end{theorem}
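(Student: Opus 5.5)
The plan is to combine a mean bound with a concentration estimate for a Lipschitz functional, using the logarithmic Sobolev inequality (LSI) for the path measure. The paper's abstract says such an LSI is derived in the high-field regime; I take it as available in the form used for Theorem~\ref{thm:concentration1}, i.e.\ a modified LSI for the Gibbs path measure with constant governed by $e+r(\beta,b;\norm{J})$.

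\emph{Mean bound.} By Jensen and \eqref{eq:orderp} integrated over $t,s$, followed by Corollary~\ref{cor:correlation-matrix},
\[
\big\langle\norm{C_N^{(1,2)}}_{\HS}\big\rangle^\otimes_{J,\bh}\le\Big(\tfrac1{N^2}\textstyle\sum_{j,k}C_{jk}^2\Big)^{1/2}\le \mathfrak m(\beta,b;\norm{J})/\sqrt N .
\]
This accounts for the shift in \eqref{eq:reconc}. It remains to bound $\langle \one\{F-\langle F\rangle^\otimes>\epsilon\}\rangle^\otimes$ for $F=\norm{C_N^{(1,2)}}_{\HS}$.

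\emph{Lipschitz structure.} Write $C_N^{(1,2)}=\frac1N\sum_j|\eta^{(1)}_j\rangle\langle\eta^{(2)}_j|$ with $\eta^{(a)}_j=\xi^{(a)}_j-\bar\xi_j\in L^2(0,1)$ and $\norm{\eta^{(a)}_j}_\infty\le2$. The duplicated Gibbs measure is the product of two copies of the path Gibbs measure, and the LSI tensorizes. Hence it is enough to control how $F$ responds to changes of a single path $\xi^{(a)}_j$, in the sense in which the LSI is formulated. For the LSI measured by the Dirichlet form of the spin-flip (Poisson) dynamics, I would exploit this: changing $\xi_j^{(1)}$ on a set $A$ changes $C_N^{(1,2)}$ by the rank-one term $\frac1N|\delta\rangle\langle\eta_j^{(2)}|$, whose HS norm is at most $\frac{2}{N}\norm{\delta}_2$ with $|\delta|\le2\one_A$. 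So the discrete gradient of $F$ is bounded by sums over $j$ of per-coordinate variations of order $1/N$. The total "carré du champ" $\Gamma(F)$ is then of order $1/N$ times a constant: the factor $16$ in $\sigma_{\mathrm r}^2=16e+8r$ is $4\cdot 4$, from $|\eta|\le2$ applied in both slots, compared with $e+r$ in the self-overlap case. The other route is to use the Gaussian (Hubbard–Stratonovich) representation behind the Brascamp–Lieb argument, where $F$ is Lipschitz in an auxiliary Gaussian field with constant $O(N^{-1/2})$. In either case the outcome should be a Herbst-type bound of Bernstein form rather than purely Gaussian form, since the modified LSI for jump processes gives only sub-exponential tails:
\[
\langle e^{\lambda(F-\langle F\rangle)}\rangle^\otimes\le\exp\Big(\tfrac{\lambda^2\sigma_{\mathrm r}^2}{4N(1-4\lambda/N)}\Big)\quad\text{or similar}.
\]

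\emph{Tail bound.} Chebyshev's exponential inequality, optimized over $\lambda$ (taking $\lambda\sim N\epsilon/\sigma_{\mathrm r}^2$ in the Gaussian regime and capping it at order $N$), produces $\exp[-\frac N{16}\min\{\epsilon,4\epsilon^2/\sigma_{\mathrm r}^2\}]$. The factor $2$ covers the two-sided version, or the nonnegativity of $F$ combined with symmetrization.

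\emph{Main obstacle.} The hard step is obtaining the correct Lipschitz/gradient bound of $F$ relative to the Dirichlet form entering the LSI, including the mixed dependence on both replicas. I would first do exactly what the proof of Theorem~\ref{thm:concentration1} does for $\norm{S_N-\langle S_N\rangle}_{\HS}$ and adapt it. Two differences need handling. The kernel is now bilinear in two independent replicas, which costs the factor $8$ in front of $r$: $2$ replicas times $|\eta|\le 2$ squared. The centering is by $\bar\xi_j$, which does not depend on the paths and therefore contributes no gradient.
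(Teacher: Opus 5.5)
\textbf{There is a genuine gap in the concentration step.} Your mean bound via Jensen, \eqref{eq:orderp} and Corollary~\ref{cor:correlation-matrix} is correct. The concentration step, however, rests on a tool that is not available.

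You assume a (modified) logarithmic Sobolev inequality for the \emph{interacting Gibbs measure on paths}, with constant governed by $e+r$. The paper proves no such inequality, and this is exactly the hard object. The free measure $\nu_1$ is a parity-conditioned Poisson process, so the usual add-one-point gradient on Poisson space leaves its support. Moreover, a perturbative (Holley--Stroock) treatment of the interaction tilt, whose oscillation is of order $\beta N\norm{J}$, only gives constants that degrade like $e^{O(N)}$. The only LSI in the paper is for the \emph{outer} Gaussian field $\bw$ of the Hubbard--Stratonovich representation. Its constant is $\kappa=\beta(b-\norm{J}\tanh(\beta b))/(b\norm{J})$, obtained from Bakry--\'Emery together with the Ding--Song--Sun bound $D^2F_N^D\leq\gamma\one$. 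You also misread Theorem~\ref{thm:concentration1}: there $e+r$ is not an LSI constant. It is the sum of a Pinelis contribution ($e$) and a Herbst contribution ($r$). Consequently, your Bernstein-type moment generating function bound (``or similar'') is not derived from anything.

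\textbf{Your second route misses the decisive step.} Naming the Gaussian field is the right direction, but $F=\norm{C_N^{(1,2)}}_{\HS}$ is a function of the paths, not of $\bw$, so it is not Lipschitz in the auxiliary field. What is needed is a two-level decomposition.
\begin{itemize}
\item \emph{Decomposition.} Conditionally on $\bw=(\bw^{(1)},\bw^{(2)})$, the discretized paths are independent across sites and replicas. Hence $C_N^{D,(1,2)}=B_N^D(\bw)+\frac1N\sum_jZ_j$. Here $B_N^D$ is the conditional mean, with kernel $\frac1N\sum_j u_j(w_j^{(1)})\otimes u_j(w_j^{(2)})$ and $u_j=\Av_{w_j}(\xi_j^D)-\bar\xi_j^D$. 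The $Z_j$ are conditionally independent and centered, with $\norm{Z_j}_{\HS}\leq 8$ and conditional second moment at most $16$.
\item \emph{Inner part.} Proposition~\ref{prop:Pinelis} gives $\Av^{\otimes}_{\bw}\,e^{tN\norm{C_N^{D,(1,2)}-B_N^D}_{\HS}}\leq 2e^{8Nt^2e^{8t}}$, uniformly in $\bw$. This is where $16e$ (via $e^{8t}\leq e$ for $t\leq 1/8$) and the prefactor $2$ come from. The $2$ is not from two-sidedness.
\item \emph{Outer part, gradient.} Ding--Song--Sun gives $D_{w_j}u_j=\beta Q_D\mu_{\widehat w_j}Q_D$ with norm at most $\norm{\mu_0}\leq\tanh(\beta b)/(\beta b)$. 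Together with $\norm{u_j}_2\leq 2$ and two replicas, this yields $\norm{\nabla f}_2^2\leq 8\beta\tanh(\beta b)/(bN)$ for $f=\norm{B_N^D}_{\HS}$.
\item \emph{Outer part, Herbst and mean.} Herbst with $\kappa$ then gives $\mathcal E e^{tNf}\leq\exp(tN\mathcal E f+4rNt^2)$. Since $B_N^D$ is the conditional mean of $C_N^{D,(1,2)}$, Jensen gives $\mathcal E f\leq\mathfrak m/\sqrt N$ after $D\to\infty$. Your mean bound enters here.
\item \emph{Conclusion.} Multiplying the two bounds gives $2\exp\bigl(t\mathfrak m\sqrt N+\tfrac{Nt^2}{2}\sigma_{\mathrm r}^2\bigr)$ for $0<t\leq 1/8$. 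Markov's inequality with $t=\min\{\epsilon/\sigma_{\mathrm r}^2,1/8\}$ yields \eqref{eq:reconc}.
\end{itemize}
Your counting $16=4\cdot4$ and $8=2\cdot4$ does match the origins of these two ingredients. Without the conditional decomposition, though, nothing in your argument produces them or combines them.
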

The proof is found in Subsection~\ref{sec:replica-proof}. 
The bound~\eqref{eq:reconc} in particular implies that all moments of the centered replica-overlap vanish as $N \to \infty$, i.e., for any $ p \in \mathbb{N} $ there is some $ C_p \equiv C_p(\beta , b; \| J \|) $ such that for all $ N $:
\begin{equation}\label{eq:replmom}
    \big\langle \big\| C_N^{(1,2)} \big\|^p_\HS\big\rangle_{J,\mathbf{h}}^\otimes \leq \frac{C_p(\beta,b;\norm{J})}{N^{p/2}} .
\end{equation}

\section{Applications to quantum spin glasses}\label{sec:applications}

\subsection{Quantum Sherrington--Kirkpatrick model}

Let $g_{jk}$, $1\leq j,k\leq N$, be independent standard Gaussian variables and
abbreviate by
\begin{equation}\label{def:GOE}
 G_{j,k} =\frac{g_{jk}+g_{kj}}{\sqrt{2N}}
\end{equation}
the matrix elements of a standard GOE random matrix.  The quantum
Sherrington--Kirkpatrick (QSK) Hamiltonian is the random transversal field model
\begin{equation}\label{def:QSK}
 H_N(G,\bh)
 =-\frac12\sum_{j,k=1}^N G_{jk}S_j^zS_k^z
  -\sum_{j=1}^Nh_jS_j^z-b\sum_{j=1}^NS_j^x.
\end{equation}
This model and its transverse-field phase diagram have been studied extensively in the physics literature~\cite{FS86,Usadel:1987aa,Yamamoto:1987aa,RCC89,GL90,Young:2017aa,TTC17}, but much less is known with mathematical rigor~\cite{AdBr20,Leschke:2021aa,Leschke:2021ab,Itoi:2023aa,MW25}.\\

In the classical model, replica-symmetry breaking is encoded by Parisi's overlap
order parameter \cite{Parisi:1980aa,Mezard:1986aa}. At $ b = 0 $ and $ \bh = 0 $ the second moment of the replica overlap is $ \mathbb{E}\left[ \langle\| C_N^{(1,2)} \|^2_{\HS} \rangle_{G,\mathbf{0}}^\otimes \right] $, cf.~\eqref{eq:orderp} and~\eqref{eq:quenched-variance}. At constant non-vanishing longitudinal field, $ \bh = (h, \dots, h) $, the classical Almeida--Thouless (AT)
criterion describes the stability of the replica-symmetric solution in a constant 
longitudinal field~\cite{AT78}; see~\cite{Toninelli02,L26} for a proof. The AT criterion guarantees that at any $ h \neq 0 $,  there is a spin-glass phase with non-zero Parisi order parameter at sufficiently large $ \beta $. 
In contrast, replacing the longitudinal field by a transversal one, the spin-glass phase is predicted to cease to exist at sufficiently large $ b> 0 $ at any temperature. The critical $ b_c $ at which this happens at zero temperature marks a quantum phase transition.
There is, however, no closed-form prediction for the quantum Almeida-Thouless line in the QSK, i.e.\ $ \bh = 0 $ and $ b> 0 $ -- only for its related self-overlap corrected cousin~\cite{MW26}, and simpler quantum glass models \cite{MW21, MW20, MW22, KMW25} and related trajectory dynamics \cite{GMW22, MW25b}. The Parisi description for the QSK~\cite{MW25} is substantially more involved than in the classical case, and, aside from the replica order parameter, the self-overlap is a central quantity -- as it is in the closely related vector spin glasses~\cite{Chen23,Chen24}. \\ 

Our first main result for the QSK is the absence of replica order at sufficiently large transverse field. Combined with the low-field RSB result of~\cite{Leschke:2021ab}, it establishes a transition between low- and high-field overlap regimes. 
\begin{theorem}[High-field absence of replica order]\label{thm:norsb}
Assume
\begin{equation}\label{eq:QSK-condition}
 b>4\tanh(\beta b), 
\end{equation}
and let $\delta>0$ be small enough so that
$b>(4+2\delta)\tanh(\beta b)$. 
Then, on an event $ \Omega_{N,\delta} $, whose probability is exponentially close to one for large $N$,  and at every $\bh\in\R^N$,
\begin{equation}\label{eq:QSK-noRSB}
 \left\langle\norm{C_N^{(1,2)}}_{\HS}^2\right\rangle_{G,\bh}^{\otimes}
 \leq\frac{\mathfrak m(\beta,b;4+2\delta)^2}{N}.
\end{equation}
In particular, at $ \bh = 0 $ the replica order parameter vanishes in the
thermodynamic limit.
\end{theorem}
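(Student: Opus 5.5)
Since $G$ is not positive semidefinite, the plan is to apply the diagonal-shift trick mentioned after \eqref{def:Gibbs}. Set
\[
J \coloneqq G + \lambda \one ,\qquad \lambda \coloneqq 2+\delta .
\]
Because $(S_j^z)^2=\one$, adding $\lambda\one$ to the interaction changes $H_N$ only by the constant $-\tfrac{\lambda}{2}N$. Hence the Gibbs states $\langle\cdot\rangle_{J,\bh}$ and $\langle\cdot\rangle_{G,\bh}$ coincide, and so do their duplicated versions. In particular, the correlation matrices $C(J,\bh)$ and $C(G,\bh)$ are identical, and so is the law of $C_N^{(1,2)}$ under the duplicated measure.

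Define the event
\[
\Omega_{N,\delta}\coloneqq\{\norm{G}\leq 2+\delta\}.
\]
On this event $J\geq 0$ and $\norm{J}\leq \lambda+\norm{G}\leq 4+2\delta$. By the choice of $\delta$, we have $b>(4+2\delta)\tanh(\beta b)\geq \norm{J}\tanh(\beta b)$, so the high-field condition \eqref{eq:high-field-condition} holds for this $J$ at every $\bh\in\R^N$.

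Now combine \eqref{eq:orderp} with Corollary~\ref{cor:correlation-matrix}, integrating the pointwise identity over $(t,s)\in[0,1]^2$. This gives
\[
\big\langle\norm{C_N^{(1,2)}}_\HS^2\big\rangle^\otimes_{G,\bh}=\frac1{N^2}\sum_{j,k}C_{jk}^2\leq \frac{\mathfrak m(\beta,b;\norm{J})^2}{N}.
\]
Next we need monotonicity of $\mathfrak m$ in its last argument. The function $m(\beta,b;x)=\tanh(\beta b)/(\beta(b-x\tanh(\beta b)))$ is increasing in $x$ on the admissible range $x\tanh(\beta b)<b$, and therefore $\mathfrak m = m+\sqrt{\beta b\, m}$ is increasing as well. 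This allows us to replace $\norm{J}$ by $4+2\delta$, which yields \eqref{eq:QSK-noRSB} simultaneously for all $\bh$ on $\Omega_{N,\delta}$.

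The one probabilistic input is that $\Pp(\Omega_{N,\delta}^c)$ is exponentially small. This is a standard GOE estimate. First, $\E\norm{G}\leq 2+o(1)$; for example, it follows from Slepian/Gordon-type comparison, or from moment bounds, giving $\E\norm{G}\le 2$ up to normalization details. Second, $G\mapsto\norm{G}$ is Lipschitz in the Gaussian variables $(g_{jk})$ with constant $O(N^{-1/2})$; indeed $\norm{G}\le\norm{G}_\HS$ and the map $g\mapsto G$ has Lipschitz constant $\sqrt{2/N}$. Gaussian concentration then gives
\[
\Pp(\norm{G}>2+\delta)\leq e^{-cN\delta^2}
\]
for $N$ large, where the $o(1)$ correction is absorbed into $\delta/2$.

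This step is the main obstacle, though only mildly so, because it requires a correct non-asymptotic control of $\E\norm{G}$. If a sharp constant is inconvenient, I would fall back on Gordon's inequality for the GOE, which gives $\E\lambda_{\max}\le 2$ directly. I would then apply the same bound to $-G$ and take a union bound.

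The final assertion concerns $\bh=0$. There $\bar\xi_j=0$ by \eqref{eq:nohmag}, so $C_N^{(1,2)}=R_N^{(1,2)}$. On $\Omega_{N,\delta}$ the bound is $O(1/N)$, while on the complement the trivial bound $\norm{R_N^{(1,2)}}_\HS\le 1$ costs at most an exponentially small probability. Taking $\E$ therefore gives
\[
\E\big[\langle\norm{R_N^{(1,2)}}_\HS^2\rangle^\otimes_{G,0}\big]\to0,
\]
that is, the replica order parameter vanishes in the thermodynamic limit.
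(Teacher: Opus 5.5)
Your proposal is correct and follows the paper's proof. It uses the same shift $J=G+(2+\delta)\one$ on the event $\Omega_{N,\delta}=\{\norm{G}\leq 2+\delta\}$, then \eqref{eq:orderp} together with Corollary~\ref{cor:correlation-matrix}, and makes explicit the monotonicity of $\mathfrak m$ in $\norm{J}$ that the paper uses tacitly. The only difference is in the probabilistic input: the paper cites the Bai--Yin law plus Gaussian concentration, whereas your Sudakov--Fernique bound $\E\lambda_{\max}(G)\leq 2$, combined with the $\sqrt{2/N}$-Lipschitz concentration and a union bound over $\pm G$, gives the explicit non-asymptotic estimate $\Pp(\Omega_{N,\delta}^{\mathrm c})\leq 2e^{-N\delta^2/4}$.
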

\begin{proof}
On the event 
\begin{equation}\label{def:omega}
    \Omega_{N,\delta} \coloneqq \{\norm{G}\leq2+\delta\} , 
\end{equation} 
the interaction matrix $
 J \coloneqq G+(2+\delta)\one\geq0 $ satisfies 
$ \norm{J}\leq4+2\delta $. 
Since the interaction matrices $J$ and $G$ define the same Gibbs state, 
\eqref{eq:QSK-noRSB} follows from \eqref{eq:orderp} and \eqref{eq:Cor}.  The Bai--Yin law
\cite{BY88} implies that $\Omega_{N,\delta}$ has probability close to one for all
sufficiently large $N$ -- by standard Gaussian concentration of measure~\cite{Ledoux01}, this probability is exponentially close to one. 
\end{proof}

The criterion~\eqref{eq:QSK-condition} is uniform in temperature and, in particular, holds for all
$\beta \geq 0 $ when $b>4$. 
It hence also includes the large $ \beta  $ part of the regime 
$$
\max_{q \in [0,1]} \left(\frac{\beta^2}{4} \left( 1 - (1-q)^2 \right) - \mathbb{E} \log \left[ \cosh\left(\beta g \sqrt{q} \right)\right] \right)   > \log \cosh(\beta b) ,
$$
with an  $ N(0,1) $-random variable $ g $, whose expectation is denoted by $ \mathbb{E} $.
For this regime and at $ \mathbf{h}= \mathbf{0}$, it was shown in \cite[Eq.~(3.15)]{Leschke:2021aa} that the annealed free energy deviates from the quenched free energy. Consequently,  as in the SK model with a longitudinal field above the AT-line, the (unconditioned) second-moment method does not yield a proof of the absence of RSB in this regime.

 The criterion~\eqref{eq:QSK-condition} is not expected to be sharp. Technically, it shares the same weakness as similar conditions in a simple proof of rapid mixing of the Glauber dynamics in the classical SK model~\cite{BaBo19}. For our question, weak-disorder methods give \cite{Leschke:2021aa} 
a complementary replica-symmetric region, while replica
symmetry breaking is known to persist at low temperature and sufficiently small
transverse field \cite{Leschke:2021ab}.  Numerical and perturbative studies
place the zero-temperature transition $ b_c \approx 1.56\dots $, which is below the present sufficient threshold
\cite{Yamamoto:1987aa,Young:2017aa}.

\subsection{Sample-dependent concentration and the pressure at large field}

The following is the direct disorder-averaged consequence of
Theorem~\ref{thm:concentration1}.  

\begin{theorem}[QSK self-overlap concentration]\label{thm:concSK}
Under \eqref{eq:QSK-condition}, choose $\delta$  as in
Theorem~\ref{thm:norsb}, and set
$r_\delta\coloneqq r(\beta,b;4+2\delta)$ and $\sigma_\delta^2=e+r_\delta$.  For every $\epsilon>0$,
\begin{align}
 &\E\left\langle\one\!\left\{
 \norm{S_N-\langle S_N\rangle_{G,\bh}}_{\HS}
 >\epsilon+\sqrt{\frac{r_\delta}{N}}
 \right\}\right\rangle_{G,\bh} \leq
 2\exp\left[-\frac{N\epsilon\min\{\epsilon,\sigma_\delta^2\}}
 {4\sigma_\delta^2}\right]
 +\Pp(\Omega_{N,\delta}^{\mathrm c}).
 \label{eq:SKconcSo}
\end{align}
At $\bh=0$ and on $\Omega_{N,\delta}$, one also has
\begin{equation}\label{eq:QSK-self-mean}
 \norm{\langle S_N\rangle_{G,0}}_{\HS}^2
 \leq\frac12m(\beta,b;4+2\delta)+m(\beta,b;4+2\delta)^2 .
\end{equation}
\end{theorem}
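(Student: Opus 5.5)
The plan mirrors the proof of Theorem~\ref{thm:norsb}: work on the event $\Omega_{N,\delta}=\{\norm{G}\leq 2+\delta\}$ from~\eqref{def:omega}, shift the interaction matrix to a positive semidefinite one, and invoke the deterministic Theorem~\ref{thm:concentration1}. On $\Omega_{N,\delta}$ put $J\coloneqq G+(2+\delta)\one$. Then $J\geq 0$ and $\norm{J}\leq 4+2\delta$. The added diagonal term contributes $-\frac12(2+\delta)\sum_j (S_j^z)^2=-\frac{N}{2}(2+\delta)$, which is a constant. So $J$ and $G$ give the same Gibbs state $\langle\cdot\rangle_{G,\bh}=\langle\cdot\rangle_{J,\bh}$, and in the path representation~\eqref{eq:path-Gibbs} the weights differ only by the constant factor $e^{\beta(2+\delta)N/2}$, since $\xi_j(s)^2=1$. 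Consequently the distribution of $S_N$ is unchanged as well.

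Next I check the high-field condition~\eqref{eq:high-field-condition} for this $J$. We have $\norm{J}\tanh(\beta b)\leq(4+2\delta)\tanh(\beta b)<b$ by the choice of $\delta$. The next step is monotonicity. The functions $x\mapsto r(\beta,b;x)$ and $x\mapsto m(\beta,b;x)$ are increasing on $\{x : x\tanh(\beta b)<b\}$, because the denominator $b-x\tanh(\beta b)$ decreases in $x$ and $r$ also carries the factor $x$ in its numerator. Hence $r(\beta,b;\norm{J})\leq r_\delta$ and $m(\beta,b;\norm{J})\leq m(\beta,b;4+2\delta)$.

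For the concentration bound I use that the map $\sigma^2\mapsto \epsilon\min\{\epsilon,\sigma^2\}/(4\sigma^2)$ is nonincreasing. It equals $\epsilon^2/(4\sigma^2)$ for $\sigma^2\geq\epsilon$, and $\epsilon/4$ otherwise. Enlarging $\sigma_{\mathrm s}^2$ to $\sigma_\delta^2$ therefore only weakens the right-hand side of~\eqref{eq:soconc}. Likewise, enlarging the threshold $\sqrt{r/N}$ to $\sqrt{r_\delta/N}$ only shrinks the event on the left. Applying~\eqref{eq:soconc} to $J$ thus gives the $\Omega_{N,\delta}$-restricted bound pointwise in $G$.

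I then split $\E[\,\cdot\,]=\E[\,\cdot\,\one_{\Omega_{N,\delta}}]+\E[\,\cdot\,\one_{\Omega^{\mathrm c}_{N,\delta}}]$. On the complement I bound the Gibbs probability by $1$, which produces the term $\Pp(\Omega_{N,\delta}^{\mathrm c})$ and yields~\eqref{eq:SKconcSo}.

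Finally,~\eqref{eq:QSK-self-mean} follows on $\Omega_{N,\delta}$ from~\eqref{eq:sharp-self-bound} applied to $J$ at $\bh=0$, combined with the monotonicity of $m$. There is no substantial obstacle in this argument. The only points needing care are that the Gibbs state, and hence $S_N$, is invariant under the diagonal shift, and the monotonicity of the bounds in $\norm{J}$ and $\sigma^2$.
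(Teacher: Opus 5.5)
Your proposal is correct and follows the paper's proof: on $\Omega_{N,\delta}$ you shift to $J=G+(2+\delta)\one$, apply Theorem~\ref{thm:concentration1} (including \eqref{eq:sharp-self-bound} at $\bh=0$), and on $\Omega_{N,\delta}^{\mathrm c}$ you bound the Gibbs probability by one. Your monotonicity checks for $r$, for $m$, and for the exponent as a function of $\sigma^2$ are the details the paper leaves implicit when it replaces $\norm{J}$ by $4+2\delta$.
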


\begin{proof}
On $\Omega_{N,\delta}$ we use the shifted matrix
$J=G+(2+\delta)\one$ and apply
Theorem~\ref{thm:concentration1}.  On the complement, we bound the probability by
one and average over the disorder. The last inequality is~\eqref{eq:sharp-self-bound}. 
\end{proof}

Estimate \eqref{eq:SKconcSo} centers at the
sample-dependent Gibbs mean $\langle S_N\rangle_{G,\bh}$. It does not assert that this random center concentrates around its disorder average.  Such
a deterministic-centering statement requires a separate disorder
self-averaging estimate. At $ \bh = \mathbf{0} $, this follows from standard Gaussian concentration of measure combined with Theorem~\ref{thm:concSK} and~\ref{thm:repconc}. We spell this out in Appendix~\ref{app:sac}.\\ 

At zero longitudinal field $ \bh = \mathbf{0} $ one defines the normalized quenched pressure
\begin{equation}\label{def:QSK-pressure}
 p_N(\beta,b)
 :=\frac1N\E\log\frac{\Tr e^{-\beta H_N(G,\mathbf{0})}}
 {(2\cosh(\beta b))^N}.
\end{equation}
Its limit $ N \to \infty $ exists and is expressed in terms of a quantum Parisi formula~\cite{MW25}. In the following, we determine its asymptotic behavior as $ b \to \infty$. 
\begin{corollary}[Large-field pressure bound]\label{cor:pressure-bound}
If \eqref{eq:QSK-condition} holds and $ \bh = \mathbf{0} $, then
\begin{equation}\label{eq:pressure-bound}
\frac{\beta^2}{8} \left( \frac{\tanh(\beta b)}{\beta b } + 1 - [\tanh(\beta b)]^2 \right) \leq \lim_{N\to\infty}p_N(\beta,b) 
 \leq\frac{\beta^2}{4}
 \left(\frac12m(\beta,b;4)+m(\beta,b;4)^2\right).
\end{equation}
Moreover, for fixed $\beta$ and as $ b\to\infty $:
\begin{equation}\label{eq:pressure-asymptotic}
 \lim_{N\to\infty}p_N(\beta,b)
 = \frac{\beta}{8b}+O_\beta(b^{-2}).
\end{equation}
\end{corollary}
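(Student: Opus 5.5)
The plan is to use Gaussian interpolation in the disorder strength. For $t\in[0,1]$ let $p_N(t)$ denote the pressure~\eqref{def:QSK-pressure} with $G$ replaced by $\sqrt t\,G$, so that $p_N(0)=0$ and $p_N(1)=p_N(\beta,b)$. In the functional integral~\eqref{eq:path-Gibbs}, the field
\[
X(\bxi)=\frac\beta2\int_0^1\sum_{j,k}G_{jk}\xi_j(s)\xi_k(s)\dd s
\]
is a centered Gaussian process. Using the GOE covariance $\E G_{jk}G_{lm}=N^{-1}(\delta_{jl}\delta_{km}+\delta_{jm}\delta_{kl})$, its covariance is $\E X(\bxi)X(\bxi')=\frac{\beta^2N}{2}\norm{R_N(\bxi,\bxi')}_{\HS}^2$, where $R_N(\bxi,\bxi')(t,s)=N^{-1}\sum_j\xi_j(t)\xi'_j(s)$ is the overlap kernel of two paths. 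Gaussian integration by parts then gives
\[
\frac{d}{dt}p_N(t)=\frac{\beta^2}{4}\,\E\Bigl[\bigl\langle\norm{S_N}_{\HS}^2\bigr\rangle_t-\bigl\langle\norm{R_N^{(1,2)}}_{\HS}^2\bigr\rangle_t^{\otimes}\Bigr].
\]
Here $\langle\cdot\rangle_t$ is the Gibbs state for $\sqrt t\,G$ at $\bh=0$. I will check the diagonal terms $j=k$ and the factor $\frac12$ in front of the sum, and I expect them to contribute at most $O(1/N)$. Then $p_N(1)=\int_0^1 p_N'(t)\dd t$, and both bounds will come from controlling the two overlap terms uniformly in $t$.

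\textbf{Upper bound.} I drop the nonpositive replica term. On $\Omega_{N,\delta}$ the matrix $\sqrt t\,G+\sqrt t(2+\delta)\one$ is positive semidefinite with norm at most $4+2\delta$. The function $m(\beta,b;\cdot)$ is increasing in its last argument, so Theorem~\ref{thm:concSK} (or~\eqref{eq:self-variance} combined with~\eqref{eq:sharp-self-bound}) applies at every $t$ and gives
\[
\langle\norm{S_N}_{\HS}^2\rangle_t=\norm{\langle S_N\rangle_t}_{\HS}^2+\langle\norm{S_N-\langle S_N\rangle_t}_{\HS}^2\rangle_t\le \tfrac12 m+m^2+\frac{1+r}{N},
\]
with $m=m(\beta,b;4+2\delta)$ and $r=r(\beta,b;4+2\delta)$. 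Off $\Omega_{N,\delta}$ I use $\norm{S_N}_{\HS}\le1$ together with the exponentially small probability of $\Omega_{N,\delta}^{\mathrm c}$. Next I let $N\to\infty$, using that the limit of $p_N$ exists by~\cite{MW25}. Finally I let $\delta\downarrow0$, using continuity of $m$ and the strict inequality~\eqref{eq:QSK-condition}.

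\textbf{Lower bound.} This is the main obstacle. The target value is exactly $\frac{\beta^2}{4}\norm{\mu_0}_{\HS}^2$ by~\eqref{eq:one-spin-HS}, which is the $t=0$ value of $p_N'$; at $t=0$ the replica term vanishes by spin-flip symmetry. The replica term is harmless for all $t$: by~\eqref{eq:orderp}, $\bh=0$ and Theorem~\ref{thm:norsb} applied to $\sqrt t\,G$, it is $O(N^{-1})$ on $\Omega_{N,\delta}$. So it remains to show $\E\langle\norm{S_N}_{\HS}^2\rangle_t\ge\norm{\mu_0}_{\HS}^2-o(1)$ for $t\in[0,1]$. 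I would first try to show that $t\mapsto\E\langle\norm{S_N}_{\HS}^2\rangle_t$ is nondecreasing up to $o(1)$. Differentiating once more by Gaussian integration by parts produces terms of the form
\[
\E\bigl\langle\norm{S_N}_{\HS}^2\,\norm{R_N}_{\HS}^2\bigr\rangle
\]
minus cross-replica terms. Their leading part should be a nonnegative variance of $\norm{S_N}^2$-type quantities, and the remainders should be controllable by~\eqref{eq:replmom} and Hölder's inequality.

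If this monotonicity argument fails, the fallback is to bound $p_N$ directly. I would write $\E\log Z_N\ge\E\log Z_N\one_{A}$, where $A$ is the event on which the overlap is small (which has high probability by Theorem~\ref{thm:repconc}), and compare with a truncated second-moment/annealed computation: $\frac1N\log\E Z_N=\frac1N\log\int e^{\beta^2N\norm{S_N}^2_{\HS}/4+O(1)}\dd\nu_N\ge\frac{\beta^2}{4}\norm{\mu_0}^2_{\HS}$ by Jensen's inequality and the law of large numbers for $S_N$ under $\nu_N$.

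\textbf{Asymptotics.} For~\eqref{eq:pressure-asymptotic}, I expand both sides of~\eqref{eq:pressure-bound}. As $b\to\infty$ with $\beta$ fixed, $m(\beta,b;4)=\frac1{\beta b}+O(b^{-2})$, while $\tanh(\beta b)/(\beta b)$ and $1-\tanh^2(\beta b)$ are $\frac1{\beta b}+O(b^{-2})$ and exponentially small, respectively. Hence both bounds in~\eqref{eq:pressure-bound} equal $\frac{\beta}{8b}+O_\beta(b^{-2})$.
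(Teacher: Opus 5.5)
\textbf{Upper bound and asymptotics.} Your upper bound and your asymptotic evaluation are the paper's argument. The GOE normalization makes your interpolation formula exact, so there is no diagonal correction to check.

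\textbf{Lower bound: the monotonicity step does not close as stated.} The second integration by parts gives
\begin{equation*}
 \frac{d}{dt}\E\langle\norm{S_N}_{\HS}^2\rangle_t
 =\frac{\beta^2N}{4}\,\E\Big\langle\norm{S_N^{(1)}}_{\HS}^2\Big(\norm{S_N^{(1)}}_{\HS}^2-\norm{S_N^{(2)}}_{\HS}^2-2\norm{R_N^{(1,2)}}_{\HS}^2+2\norm{R_N^{(2,3)}}_{\HS}^2\Big)\Big\rangle_t^{\otimes}.
\end{equation*}
The prefactor $N$ comes from the covariance $\frac{\beta^2N}{2}\norm{R_N}_{\HS}^2$ of the Gaussian field, and it cancels the $N^{-1}$ decay of the replica overlap. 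If you bound the two replica terms separately, e.g.\ via $\langle\norm{R_N^{(1,2)}}_{\HS}^2\rangle_t^{\otimes}\leq\mathfrak m^2/N$ together with \eqref{eq:replmom} and H\"older, you only get an error of size $\frac{\beta^2}{2}\mathfrak m^2$. This does not vanish as $N\to\infty$. At large $b$ it is not even small compared with the target $\frac{\beta^2}{4}\norm{\mu_0}_{\HS}^2$, since $\mathfrak m\geq1$ because $C_{jj}(J,\mathbf 0)=1$.

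What is missing is a cancellation. Since $\langle\norm{R_N^{(1,2)}}_{\HS}^2\rangle_t^{\otimes}=\langle\norm{R_N^{(2,3)}}_{\HS}^2\rangle_t^{\otimes}$, you may replace $\norm{S_N^{(1)}}_{\HS}^2$ by $\norm{S_N^{(1)}}_{\HS}^2-\langle\norm{S_N}_{\HS}^2\rangle_t$ in the replica terms. Then apply Cauchy--Schwarz with two inputs:
\begin{itemize}
\item $\Var_t(\norm{S_N}_{\HS}^2)\leq4\langle\norm{S_N-\langle S_N\rangle_t}_{\HS}^2\rangle_t\leq4(1+r)/N$, from \eqref{eq:self-variance};
\item $\langle\norm{R_N^{(1,2)}}_{\HS}^4\rangle_t^{\otimes}\leq C_4/N^2$, from \eqref{eq:replmom}, using that $R_N^{(1,2)}=C_N^{(1,2)}$ at $\bh=0$.
\end{itemize}
This bounds the replica terms by $O(N^{-1/2})$ on $\Omega_{N,\delta}$, uniformly in $t$ once the constants are taken at $\norm{J}=4+2\delta$. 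The complement $\Omega_{N,\delta}^{\mathrm c}$ contributes only $O(N)\,\Pp(\Omega_{N,\delta}^{\mathrm c})$. The first two terms form the nonnegative quantity $\frac{\beta^2N}{4}\E\Var_t(\norm{S_N}_{\HS}^2)$. Hence $\E\langle\norm{S_N}_{\HS}^2\rangle_t\geq\norm{\mu_0}_{\HS}^2-o(1)$, and the lower bound follows.

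\textbf{Your fallback does not work.} By Jensen, $\frac1N\log\E Z_N$ is an upper bound on $p_N$, not a lower bound. The paper also notes that in the large-$\beta$ part of \eqref{eq:QSK-condition} the annealed and quenched pressures genuinely differ. Moreover, the event controlled by Theorem~\ref{thm:repconc} is a Gibbs event, not a truncation of the path measure on which a second-moment computation could be run.

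\textbf{Comparison with the paper.} The paper's lower bound is a truncation of exactly that kind, but on the self-overlap. It restricts $\nu_N$ to $\mathcal Q_{N,\varepsilon}=\{\norm{S_N-\mu_0}_{\HS}\leq\varepsilon\}$, which can only decrease $Z_N$. It then quotes \cite[Thm.~2.6]{MW26} for the limit $\frac{\beta^2}{4}\norm{\mu_0}_{\HS}^2$ of this constrained pressure, valid already for $b>\tanh(\beta b)$. Your repaired interpolation route would be self-contained and avoid \cite{MW26}. The price is that it uses \eqref{eq:QSK-condition} through Theorems~\ref{thm:concentration1} and~\ref{thm:repconc}.
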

\begin{proof}
Replacing $G$ by $\sqrt t\,G$ and denoting the corresponding pressure by
$p_N(t)$,  Gaussian integration by parts gives
\begin{equation}\label{eq:pressure-interpolation}
 p_N'(t)=\frac{\beta^2}{4}\E\left\langle
 \norm{S_N}_{\HS}^2-\big\| R_N^{(1,2)}\big\|_{\HS}^2
 \right\rangle_t^{\otimes}\geq0.
\end{equation}
The non-negativity is immediate by writing the difference as a sum of
variances of the path inner products $\inner{\xi_j}{\xi_k}$.  Hence $p_N(1)\geq
p_N(0)=0$ and
\[
 p_N(1)\leq\frac{\beta^2}{4}\int_0^1
 \E\langle\norm{S_N}_{\HS}^2\rangle_t\dd t.
\]
On $\Omega_{N,\delta}$, the Gibbs state at every $t\in[0,1]$ can be represented
with the positive matrix $\sqrt t\,G+(2+\delta)\one$, whose norm is at most
$4+2\delta$.  By \eqref{eq:sharp-self-bound}, \eqref{eq:self-variance}, and the
Hilbert-space variance decomposition,
\[
 \langle\norm{S_N}_{\HS}^2\rangle_t
 \leq\frac12m(\beta,b;4+2\delta)+m(\beta,b;4+2\delta)^2
 +\frac{1+r(\beta,b;4+2\delta)}{N}.
\]
On $\Omega_{N,\delta}^{\mathrm c}$, we use $\norm{S_N}_{\HS}\leq1$.  Letting
$N\to\infty$ and then $\delta\downarrow0$ proves
the upper bound in~\eqref{eq:pressure-bound}. 

For a lower bound, we restrict the path measure $ \nu_N $ to the set $ \mathcal{Q}_{N,\varepsilon} \coloneqq \left\{ \norm{ S_N - \mu_0 }_{\HS} \leq \varepsilon \right\} $ where the self-overlap is constrained to the free self-overlap from~\eqref{eq:free-loop-covariance}. It has been shown in~\cite[Thm.~2.6]{MW26} that in case $ b > \tanh(\beta b) $ the self-overlap-constrained pressure, which is a lower bound to $ p_N(\beta, b) $, converges as $ N \to \infty$ and subsequently $ \varepsilon \downarrow 0 $ to 
$$
\frac{\beta^2}{8} \left( \frac{\tanh(\beta b)}{\beta b } + 1 - [\tanh(\beta b)]^2 \right) = \frac{\beta^2}{4} \norm{\mu_0}_\HS^2 \leq \lim_{N\to \infty} p_N(\beta, b) .
$$
The first identity is by explicit integration. 

The proof of~\eqref{eq:pressure-asymptotic} is by explicit asymptotic evaluation of the bounds. 
\end{proof}

\subsection{Quantum Edwards--Anderson models}

Let $\Lambda$ be a finite graph with $N=|\Lambda|$, and let $K$ be any real
symmetric interaction matrix supported on its edges.  The corresponding
quantum Edwards--Anderson Hamiltonian is \eqref{eq:Ham} with $J$ replaced by a random matrix
$K$.  For every realization, its spectral diameter
\begin{equation}\label{def:spectral-diameter}
 d(K)=\lambda_{\max}(K)-\lambda_{\min}(K)
\end{equation}
measures the distance of the maximal and minimal eigenvalue of $ K $. 

\begin{corollary}[High-field Edwards--Anderson model]\label{cor:EA}
If
\begin{equation}\label{eq:EA-condition}
 b>d(K)\tanh(\beta b),
\end{equation}
then all conclusions of Theorems~\ref{thm:Hessian},
\ref{thm:concentration1}, and \ref{thm:repconc} hold with $ \norm{J} $ replaced by $ d(K)$.  In
particular,
\begin{equation}\label{eq:EA-overlap}
 \left\langle\norm{C_N^{(1,2)}}_{\HS}^2\right\rangle_{K,\bh}^{\otimes}
 \leq\frac{\mathfrak m(\beta,b;d(K))^2}{N}.
\end{equation}
If the graph has maximal degree $\Delta$ and
$|K_{xy}|\leq J_*$ on every edge $xy$, then $d(K)\leq2\Delta J_*$; hence the uniform
sufficient condition
\begin{equation}\label{eq:EA-bounded-condition}
 b>2\Delta J_*\tanh(\beta b)
\end{equation}
applies to every realization of bounded random couplings.  At $\bh=0$, the
centered overlap is the usual Edwards--Anderson replica overlap.
\end{corollary}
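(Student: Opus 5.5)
The plan is to reduce to the positive semidefinite case by a diagonal shift, exactly as in the proofs of Theorem~\ref{thm:norsb} and Theorem~\ref{thm:concSK}. Set
\[
 J \coloneqq K-\lambda_{\min}(K)\,\one .
\]
Then $J\geq0$, and its spectrum is contained in $[0,\lambda_{\max}(K)-\lambda_{\min}(K)]$. Hence $\norm{J}=d(K)$.

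The next step is to check that $K$ and $J$ define the same Gibbs state. Since $(S_j^z)^2=\one$, we have
\[
 -\tfrac12\sum_{j,k}J_{jk}S_j^zS_k^z=-\tfrac12\sum_{j,k}K_{jk}S_j^zS_k^z+\tfrac12\lambda_{\min}(K)N\,\one ,
\]
so $H_N(J,\bh)$ and $H_N(K,\bh)$ differ only by a scalar multiple of the identity. Consequently the Gibbs states \eqref{def:Gibbs} coincide. In the path representation \eqref{eq:path-Gibbs} the same shift appears as a constant factor, because $\xi_j(s)^2=1$. The partition functions therefore differ by the factor $e^{\beta\lambda_{\min}(K)N/2}$, which does not depend on $\bh$. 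It follows that $D_{\bh}^2\log Z_N$ is unchanged, and so are all quantities entering the theorems: $C(J,\bh)$, $S_N$, $C_N^{(1,2)}$ and the duplicated Gibbs measure. The condition \eqref{eq:EA-condition} is precisely \eqref{eq:high-field-condition} for this $J$. Theorems~\ref{thm:Hessian}, \ref{thm:concentration1} and \ref{thm:repconc} therefore apply to $J$ with $\norm{J}$ replaced by $d(K)$. The overlap bound \eqref{eq:EA-overlap} then follows from \eqref{eq:orderp} combined with the Hilbert--Schmidt bound in Corollary~\ref{cor:correlation-matrix}.

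For the degree bound, I would use the Schur test, or equivalently Gershgorin. Any eigenvalue $\lambda$ of $K$ satisfies
\[
 |\lambda|\leq\max_x\sum_y|K_{xy}|\leq\Delta J_* ,
\]
because each row has at most $\Delta$ nonzero entries. This requires $K$ to be supported on edges only, so that the diagonal vanishes; if diagonal entries were present they could be removed first, again without changing the Gibbs state. Therefore $\lambda_{\max}(K)\leq\Delta J_*$ and $\lambda_{\min}(K)\geq-\Delta J_*$, which gives $d(K)\leq2\Delta J_*$. Since the quantities $\mathfrak m$ and $r$ are increasing in $\norm{J}$, condition \eqref{eq:EA-bounded-condition} implies \eqref{eq:EA-condition} for every realization of the couplings.

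Finally, at $\bh=0$ the spin-flip symmetry \eqref{eq:nohmag} gives $\bar\xi_j=0$ for every $j$. Hence $C_N^{(1,2)}=R_N^{(1,2)}$, which is the usual Edwards--Anderson replica overlap. None of these steps is a real obstacle. The only point needing care is the observation that the diagonal shift leaves $\log Z_N$ changed only by an $\bh$-independent constant, so that Hessians and all Gibbs expectations are identical.
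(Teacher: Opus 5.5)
Your proposal is correct and is essentially the paper's proof. The shift $J=K-\lambda_{\min}(K)\one$ gives $J\geq0$ with $\norm{J}=d(K)$ and leaves the Gibbs state unchanged, so Theorems~\ref{thm:Hessian}, \ref{thm:concentration1}, \ref{thm:repconc} apply, and \eqref{eq:orderp} together with Corollary~\ref{cor:correlation-matrix} gives \eqref{eq:EA-overlap}. The only difference is that you justify $\norm{K}\leq\Delta J_*$ via Gershgorin or the Schur test, whereas the paper simply asserts this bound.
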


\begin{proof}
We shift the interaction matrix 
$J=K-\lambda_{\min}(K)\one$, so that  $J\geq0$ and $\norm{J}=d(K)$.  The bound
$d(K)\leq2\Delta J_*$ follows from $\norm{K}\leq\Delta J_*$.
\end{proof}

\section{Methods and proofs}
This section contains the core of the techniques developed in this paper. It is dedicated to the proofs of Theorem~\ref{thm:Hessian} and the concentration results, Theorems~\ref{thm:concentration1} and~\ref{thm:repconc}. Theorem~\ref{thm:Hessian} has two ingredients: 1.~a straightforward generalization of the correlation inequality of Ding-Song-Sun~\cite{DSS23} for ferromagnetic lattice systems to a quantum spin in a combination of a transversal and longitudinal field, and 2.~a Gaussian convolution inequality, which is implied by a Hubbard-Stratonovich linearization of the quadratic interaction in the functional integral combined with the inequality of Brascamp-Lieb~\cite{BL76}. 
This will also yield a logarithmic Sobolev inequality, which is one of the main ingredients in the proofs of Theorems~\ref{thm:concentration1} and~\ref{thm:repconc}. 
We start with the proof of the Ding-Song-Sun inequality adapted to our present set-up.

\subsection{A one-spin correlation inequality}

For a real $h\in L^2(0,1)$ we define
\begin{equation}\label{def:Z1h}
 Z_1(h)=\int e^{\beta\inner{h}{\xi}}\nu_1(\dd\xi),
 \qquad
 \langle F\rangle_h=Z_1(h)^{-1}\int F(\xi)e^{\beta\inner{h}{\xi}}\nu_1(\dd\xi).
\end{equation}
Let $\mu_h$ be the covariance operator on $L^2(0,1)$ with kernel
\begin{equation}\label{def:muh}
 \mu_h(t,s)=\langle\xi(t)\xi(s)\rangle_h
 -\langle\xi(t)\rangle_h\langle\xi(s)\rangle_h.
\end{equation}
At $ h = 0 $, this specializes to $ \mu_0(t,s) $ given by~\eqref{eq:free-loop-covariance}. 
\begin{proposition}[Ding--Song--Sun correlation inequality]\label{prop:DSS}
For all real $h,\varphi\in L^2(0,1)$,
\begin{equation}\label{eq:DSS}
 0\leq\inner{\varphi}{\mu_h\varphi}
 \leq\inner{|\varphi|}{\mu_0|\varphi|}
 \leq \frac{\tanh(\beta b)}{\beta b} \norm{\varphi}_2^2.
\end{equation}
\end{proposition}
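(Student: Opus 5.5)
The plan is to prove the three inequalities in \eqref{eq:DSS} separately. The first and last are soft. The middle one is the quantum (path-space) analogue of the Ding--Song--Sun argument, and I would prove it by duplicating the single-spin path measure.

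\emph{Outer inequalities.} Nonnegativity holds because $\mu_h$ is a covariance: $\inner{\varphi}{\mu_h\varphi}=\Var_h(\inner{\varphi}{\xi})\ge0$. For the last inequality, $\mu_0(t,s)$ depends only on $t-s$ modulo $1$ and is a positive-definite kernel on the circle. It is therefore diagonalized by Fourier modes, with nonnegative eigenvalues $\hat\mu_0(k)=\int_0^1\mu_0(u,0)e^{-2\pi i k u}\dd u$. Since $\mu_0\ge0$ pointwise, $|\hat\mu_0(k)|\le\hat\mu_0(0)$. Integrating \eqref{eq:free-loop-covariance} gives
\[
\hat\mu_0(0)=\frac{1}{\cosh(\beta b)}\int_0^1\cosh(\beta b(1-2u))\dd u=\frac{\tanh(\beta b)}{\beta b}.
\]
This bounds the operator norm and yields the last inequality.

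\emph{Duplication for the middle inequality.} Let $\xi,\xi'$ be independent under $\nu_1$. Introduce $\rho=(\xi+\xi')/2$ and $\tau=(\xi-\xi')/2$, which take values in $\{-1,0,1\}$ with $|\rho|+|\tau|=1$. Then
\[
\inner{\varphi}{\mu_h\varphi}=\frac{2}{Z_1(h)^2}\int\!\!\int\inner{\varphi}{\tau}^2e^{2\beta\inner{h}{\rho}}\nu_1(\dd\xi)\nu_1(\dd\xi').
\]
Let $A=\{\tau\neq0\}$, a finite union of intervals (excursions), and note $A^c=\{\rho\ne0\}$. Swapping $\xi$ and $\xi'$ on a single excursion $I\subset A$ preserves $\nu_1\otimes\nu_1$. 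Indeed, the paths agree at the endpoints of $I$, so periodicity, parity and the Poisson weights are unchanged, and the weight $e^{2\beta\inner{h}{\rho}}$ is untouched. This swap flips the sign of $\tau$ on $I$ only. Averaging over these signs kills all cross terms between distinct excursions, and on each excursion $\tau=\pm1$ is constant. Hence
\[
\inner{\varphi}{\tau}^2 \;\longrightarrow\; \sum_{I\subset A}\Big(\int_I\varphi\Big)^2\le\sum_{I\subset A}\Big(\int_I|\varphi|\Big)^2=:F_\varphi(A).
\]
Running the same identity at $h=0$ shows that the right-hand side of the middle inequality equals $2\,\E_{0}[F_{|\varphi|}(A)]$, where $\E_0$ denotes expectation under $\nu_1\otimes\nu_1$. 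In the same way, flipping $\rho$ on each excursion of $A^c$ (swap $\xi\leftrightarrow-\xi'$ there) replaces $e^{2\beta\inner{h}{\rho}}$ by $\prod_{J\subset A^c}\cosh(2\beta\int_Jh)$. Using $Z_1(h)=Z_1(-h)$, we also get $Z_1(h)^2=\E_0[e^{2\beta\inner{h}{\tau}}]=\E_0\big[\prod_{I\subset A}\cosh(2\beta\int_Ih)\big]$.

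\emph{Main obstacle.} Write $C_B:=\prod_{J\subset B}\cosh(2\beta\int_J h)$ for a union of intervals $B$. It remains to show
\[
\E_0\big[F(A)\,C_{A^c}\big]\le \E_0[F(A)]\;\E_0\big[C_{A}\big].
\]
The structure to exploit is the $\rho\leftrightarrow\tau$ symmetry: the map $\xi'\mapsto-\xi'$ exchanges $A$ and $A^c$ in law. Moreover, $F$ is increasing in $A$, while $C_{A^c}$ is, heuristically, decreasing in $A$ (fewer, shorter $\rho$-excursions). I would first try to establish a negative-correlation (FKG/Harris-type) inequality for the random set $A$ under the Markovian pair dynamics, where exactly one coordinate flips at rate $2\beta b$ overall. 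That would give $\E_0[F(A)C_{A^c}]\le\E_0[F(A)]\,\E_0[C_{A^c}]$, and then $\E_0[C_{A^c}]=\E_0[C_A]$ by the symmetry. If monotonicity of the set-valued process is delicate, the fallback is to discretize time (Trotter), so that $A$ becomes a configuration of a one-dimensional two-state Markov chain. There the FKG lattice condition can be checked directly, and one passes to the continuum limit.
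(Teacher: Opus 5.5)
Your outer inequalities are fine: the circulant/Fourier argument is equivalent to the paper's Schur test. Your duplication step is also correct. It is a continuum version of the two-replica disagreement representation, and it reduces the middle inequality to
\[
\E_0\bigl[F(A)\,C_{A^c}\bigr]\le \E_0[F(A)]\,\E_0[C_{A^c}], \qquad F=F_{|\varphi|},
\]
where $\E_0[C_{A^c}]=\E_0[C_A]=Z_1(h)^2$. The gap is that this remaining inequality is the entire content of the Ding--Song--Sun theorem, and the mechanism you propose cannot prove it.

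The function $C_{A^c}=\prod_J\cosh(2\beta\int_J h)$ is decreasing in $A$ only when $h$ has a fixed sign. In that case $\cosh$ is increasing in $|x|$ and $\cosh x\cosh y\le\cosh(x+y)$ for $xy\ge0$, so your route could plausibly work. For sign-changing $h$, monotonicity fails: enlarging $A$ can split a component $J$ with $\int_J h=0$ into $J_1,J_2$ with $\int_{J_1}h=-\int_{J_2}h\neq0$, raising the factor from $1$ to $\cosh^2(\cdot)>1$. Sign-changing $h$ is exactly the case the paper needs, since the proposition is applied with $h=\widehat w_j$, a Gaussian field.

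Worse, no argument that uses only monotonicity of $F$ can succeed. Take $h=c(\one_{[0,1/2)}-\one_{[1/2,1)})$ with $c\neq0$, and the increasing function $F=\one\{A\neq\emptyset\}$. If $A=\emptyset$, the single component is the whole circle and $C_{A^c}=\cosh(0)=1$. Otherwise $C_{A^c}\ge1$, with strict inequality with positive probability (for instance when $A$ is a single short arc inside $[1/2,1)$). Hence
\[
\E_0[F\,C_{A^c}]-\E_0[F]\,\E_0[C_{A^c}]=\nu_1^{\otimes 2}(A=\emptyset)\,\bigl(\E_0[C_{A^c}]-1\bigr)>0,
\]
where $\nu_1^{\otimes2}(A=\emptyset)=1/(2\cosh^2(\beta b))>0$, since both paths are flip-free with equal sign. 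So the $C$-tilted law of $A$ is not stochastically dominated by the untilted one. Neither Harris/FKG nor a Holley-type comparison after Trotterization can close the argument. The inequality does hold for your specific $F_{|\varphi|}$, equivalently for the events that $t$ and $s$ lie in the same excursion of $A$. Proving that for fields of arbitrary sign is precisely the nontrivial part, and your proposal leaves it open.

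The paper does not reprove this step. It uses the Trotter approximation of Appendix~A to write $\mu_h(t,s)$ as a limit of truncated two-point functions of a periodic ferromagnetic Ising chain with inhomogeneous longitudinal fields. FKG then gives $\mu_h\ge0$, and the Ding--Song--Sun inequality, which holds for arbitrary real inhomogeneous fields, gives the pointwise bound $\mu_h(t,s)\le\mu_0(t,s)$. Integrating against $|\varphi(t)|\,|\varphi(s)|$ yields the middle inequality. To repair your proof, either import the Ding--Song--Sun inequality through this discretization, or supply an argument tailored to the same-excursion events that handles sign changes of $h$.
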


\begin{proof}
It suffices by an approximation argument to take bounded piecewise-continuous $h$ and
$\varphi$.  The Trotter approximation in Appendix~\ref{app:Trotter} identifies
$\mu_h(t,s)$ as a limit of connected correlations in a ferromagnetic periodic
Ising chain.  FKG gives non-negativity, and the correlation inequality of
Ding--Song--Sun \cite{DSS23} gives the pointwise domination
$0\leq\mu_h(t,s)\leq\mu_0(t,s)$ -- hence the first two inequalities in
\eqref{eq:DSS}.  Finally, the Schur test yields
\[
 \norm{\mu_0}
 \leq\sup_s\int_0^1\mu_0(t,s)\dd t
 =\frac{\tanh(\beta b)}{\beta b}.
\]
The last step is by
\eqref{eq:free-loop-covariance} and an explicit computation of the integral. 
\end{proof}

\subsection{Gaussian convolution: Brascamp--Lieb and logarithmic Sobolev bounds}
Our next concern will be a general Gaussian convolution bound. 
Let $\Sigma>0$ be a covariance matrix in $\R^{d\times d}$ and let $F\in C^2(\R^d)$.  For
$\bh\in\R^d$ we consider the potential $  V_{\bh}: \R^d \to \R $, 
\begin{equation}\label{def:Vh}
 V_{\bh}(\y)=\frac12\inner{\y-\bh}{\Sigma^{-1}(\y-\bh)}-F(\y).
\end{equation}
If $D^2F\leq\gamma\one$ and $\gamma\norm{C}<1$, then the potential satisfies the Bakry-Emery convexity condition~\cite{BM85},
\begin{equation}\label{eq:strong-convexity}
 D^2V_{\bh}\geq \Sigma^{-1}-\gamma\one\geq\kappa\one,
 \qquad \mbox{with}\; \kappa \coloneqq \norm{\Sigma}^{-1}-\gamma>0.
\end{equation}
Brascamp and Lieb consider probability measures   $\mathcal E_{\bh}$ with density proportional to
$e^{-V_{\bh}(\y)}\dd \y$.  Their inequality \cite{BL76} asserts that 
\begin{equation}\label{eq:BL}
 \Var_{\mathcal E_{\bh}}(f)
 \leq\mathcal E_{\bh}\inner{\nabla f}{(D^2V_{\bh})^{-1}\nabla f}
 \leq\mathcal E_{\bh}\inner{\nabla f}{(\Sigma^{-1}-\gamma\one)^{-1}\nabla f}
\end{equation}
for all smooth functions $ f : \mathbb{R}^d \to \mathbb{R} $. The following is a consequence of such convexity considerations. 

\begin{lemma}[Gaussian convolution bound]\label{lem:gaussian-convolution}
Let $X$ be a
centered Gaussian vector in $ \mathbb{R}^d $ with covariance matrix $\Sigma\geq0$, and let
$F:\mathbb{R}^d \to \R$ be twice continuously differentiable with $D^2F\leq\gamma\one$ and 
$\gamma\norm \Sigma<1$.  Then $ G : \mathbb{R}^d \to \mathbb{R} $,  $
 G(h) \coloneqq \log\E \exp\left(F(X+h)\right) $
satisfies
\begin{equation}\label{eq:convolution-Hessian}
 D^2G(h)\leq\gamma(\one-\gamma \Sigma)^{-1}
\end{equation}
in the sense of quadratic forms.
\end{lemma}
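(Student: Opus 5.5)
The plan is to first take $\Sigma>0$ invertible; the degenerate case $\Sigma\geq0$ then follows by applying the result to $\Sigma+\eta\one$ and letting $\eta\downarrow0$. For the limit step, $G_\eta(h)\to G(h)$ locally uniformly, and the right-hand side of \eqref{eq:convolution-Hessian} is continuous in $\Sigma$ as long as $\gamma\norm{\Sigma}<1$. A Hessian bound in the sense of quadratic forms is the same as concavity of $h\mapsto G(h)-\tfrac12\inner{h}{Mh}$, and this property is preserved under pointwise limits.

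With $\Sigma$ invertible, I would write the Gaussian expectation as an integral over $\y=X+h$:
\[
 e^{G(h)}=c_\Sigma\int e^{-\frac12\inner{\y-h}{\Sigma^{-1}(\y-h)}+F(\y)}\dd\y=c_\Sigma\int e^{-V_h(\y)}\dd\y ,
\]
with $V_h$ as in \eqref{def:Vh}. Differentiating in $h$ gives $\nabla G(h)=\Sigma^{-1}(\mathcal E_h[\y]-h)$. Differentiating once more gives
\[
 D^2G(h)=\Sigma^{-1}\operatorname{Cov}_{\mathcal E_h}(\y)\Sigma^{-1}-\Sigma^{-1}.
\]
To justify differentiating under the integral, I would use the growth bound $F(\y)\leq F(0)+\inner{\nabla F(0)}{\y}+\tfrac\gamma2|\y|^2$, which follows from $D^2F\leq\gamma\one$. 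Together with $\gamma\norm\Sigma<1$, this makes all moments of $\mathcal E_h$ finite, locally uniformly in $h$.

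The key step is to bound the covariance with the Brascamp--Lieb inequality \eqref{eq:BL}. For a fixed vector $\mathbf v$, take $f(\y)=\inner{\mathbf v}{\y}$. Since \eqref{eq:strong-convexity} gives $D^2V_h\geq\Sigma^{-1}-\gamma\one>0$, we get
\[
 \operatorname{Cov}_{\mathcal E_h}(\y)\leq(\Sigma^{-1}-\gamma\one)^{-1} .
\]
Conjugating by $\Sigma^{-1}$ and using that all the matrices involved are functions of $\Sigma$ and therefore commute:
\[
 D^2G(h)\leq\Sigma^{-1}\bigl[(\Sigma^{-1}-\gamma\one)^{-1}-\Sigma\bigr]\Sigma^{-1}
 =\Sigma^{-1}(\one-\gamma\Sigma)^{-1}\bigl[\one-(\one-\gamma\Sigma)\bigr]
 =\gamma(\one-\gamma\Sigma)^{-1},
\]
which is \eqref{eq:convolution-Hessian}.

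I do not expect a serious obstacle, since the argument is essentially a one-line application of Brascamp--Lieb. Three points need care. First, the integrability needed to differentiate under the integral sign. Second, checking that the Brascamp--Lieb hypothesis actually holds: strict convexity of $V_h$ requires $\gamma<\norm{\Sigma}^{-1}$, which is the assumption $\gamma\norm\Sigma<1$. Third, the regularization $\Sigma+\eta\one$ in the singular case, where $\eta$ must be small enough that $\gamma\norm{\Sigma+\eta\one}<1$ still holds.
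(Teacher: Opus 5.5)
Your proposal is correct and follows the paper's proof essentially step for step. Both arguments change variables to the measure $\mathcal E_h$ with potential $V_h$, use the identity $D^2G(h)=\Sigma^{-1}\operatorname{Cov}_{\mathcal E_h}(\y)\Sigma^{-1}-\Sigma^{-1}$, apply the Brascamp--Lieb bound with linear test functions, and regularize with $\Sigma+\eta\one$ in the singular case. The only difference is that you spell out two points the paper leaves implicit: the integrability needed to differentiate under the integral, and the concavity argument for passing to the limit $\eta\downarrow0$.
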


\begin{proof}
Assume first $\Sigma>0$.  After the change of variables $y=X+h$, $G(h)$ is the
logarithm of the normalizing integral for \eqref{def:Vh}.  For any $v\in\mathbb{R}^d$,
\begin{align*}
 \inner{v}{D^2G(h)v}
 &=\Var_{\mathcal E_h}\!\left(\inner{\Sigma^{-1}v}{y}\right)
   -\inner{v}{\Sigma^{-1}v}\\
 &\leq\inner{\Sigma^{-1}v}{(\Sigma^{-1}-\gamma\one)^{-1}\Sigma^{-1}v}
   -\inner{v}{\Sigma^{-1}v}\\
 &=\inner{v}{\gamma(\one-\gamma \Sigma)^{-1}v},
\end{align*}
where the inequality is \eqref{eq:BL}.  The case $\Sigma\geq0$ follows by replacing $\Sigma$ with $\Sigma+\epsilon\one$ and sending
$\epsilon\downarrow0$.
\end{proof}

The following Poincar\'e estimate~\eqref{eq:Poincare-scalar} is a well-known consequence of the Brascamp--Lieb inequality~\cite{BL76} under the convexity criterion~\eqref{eq:strong-convexity}.  By the Bakry--\'Emery theorem \cite{BM85}, the latter also implies the logarithmic
Sobolev estimate~\eqref{eq:LSI}, which addresses the entropy,  $\Ent_{\mathcal E_{\bh}}[f^2] \coloneqq \mathcal E_{\bh}\left[f^2 \left( \log f^2 - \log \mathcal E_{\bh}f^2 \right)\right] $, under the law $ \mathcal E_{\bh} $. The 
exponential-moment bound then follows by the Herbst argument
\cite{Ledoux01}.

\begin{proposition}[Poincar\'e, LSI, and concentration]\label{lem:LSI}
Under the convexity condition~\eqref{eq:strong-convexity}, for continuously differentiable  $f:\mathbb{R}^d\to \mathbb{R} $:
\begin{align}
 \Var_{\mathcal E_{\bh}}(f)
 &\leq\frac1\kappa \ \mathcal E_{\bh}|\nabla f|_2^2,
 \label{eq:Poincare-scalar}\\
 \Ent_{\mathcal E_{\bh}}(f^2)
 &\leq\frac2\kappa \ \mathcal E_{\bh}|\nabla f|_2^2.
 \label{eq:LSI}
\end{align}
If $|\nabla f|_2^2\coloneqq \sum_{j=1}^d |\partial_j f |^2 \leq L_f^2$ almost everywhere, then for every
$\lambda\in\R$,
\begin{equation}\label{eq:Herbst}
 \mathcal E_{\bh}\exp\{\lambda(f-\mathcal E_{\bh}f)\}
 \leq\exp\left(\frac{\lambda^2L_f^2}{2\kappa}\right).
\end{equation}
\end{proposition}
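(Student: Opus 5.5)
The plan is to prove the three estimates in sequence. The Poincaré bound \eqref{eq:Poincare-scalar} comes first, then the logarithmic Sobolev inequality \eqref{eq:LSI}, and finally the Herbst bound \eqref{eq:Herbst}. Throughout we use that $\mathcal E_{\bh}$ has density $e^{-V_{\bh}}$ with $D^2V_{\bh}\geq\kappa\one$ by \eqref{eq:strong-convexity}. We first reduce to smooth $f$ with bounded derivatives. This is done by truncation and mollification: $V_{\bh}$ grows at least quadratically, so $\mathcal E_{\bh}$ has Gaussian tails, and Fatou's lemma or dominated convergence passes the inequalities to general continuously differentiable $f$ with $\mathcal E_{\bh}|\nabla f|_2^2<\infty$.

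For \eqref{eq:Poincare-scalar} we start from the first inequality in \eqref{eq:BL}, $\Var_{\mathcal E_{\bh}}(f)\leq \mathcal E_{\bh}\inner{\nabla f}{(D^2V_{\bh})^{-1}\nabla f}$. Since $D^2V_{\bh}\geq\kappa\one$ pointwise, operator monotonicity of the inverse gives $(D^2V_{\bh})^{-1}\leq\kappa^{-1}\one$, and the bound follows.

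For \eqref{eq:LSI} we invoke the Bakry--Émery criterion \cite{BM85}. The Langevin generator $L=\Delta-\inner{\nabla V_{\bh}}{\nabla}$ is reversible for $\mathcal E_{\bh}$, and its carré du champ iterate satisfies $\Gamma_2(f)=\norm{D^2f}_{\HS}^2+\inner{\nabla f}{D^2V_{\bh}\nabla f}\geq\kappa\,\Gamma(f)$, which is the curvature condition $CD(\kappa,\infty)$. Together with ergodicity of the semigroup, which follows from strong convexity, this yields the LSI with constant $2/\kappa$. The standard route differentiates the entropy along the semigroup, $\frac{\dd}{\dd t}\Ent(P_tg)=-\mathcal E_{\bh}\frac{|\nabla P_tg|^2}{P_tg}$. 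One then uses the commutation bound $|\nabla P_tg|\leq e^{-\kappa t}P_t|\nabla g|$ together with Cauchy--Schwarz, integrates in $t$ from $0$ to $\infty$, and substitutes $g=f^2$. This semigroup step is the main technical obstacle, because it requires justifying regularity of $P_t$ on the unbounded domain $\R^d$. We will cite it rather than reprove it; alternatively, Caffarelli's contraction theorem transports the Gaussian LSI with variance $1/\kappa$ directly.

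For \eqref{eq:Herbst} we follow the Herbst argument \cite{Ledoux01}. Set $H(\lambda)=\mathcal E_{\bh}e^{\lambda f}$, first for bounded Lipschitz $f$ and then by approximation in general. Applying \eqref{eq:LSI} to $e^{\lambda f/2}$ gives $\lambda H'(\lambda)-H\log H\leq\frac{2}{\kappa}\cdot\frac{\lambda^2}{4}\mathcal E_{\bh}|\nabla f|^2e^{\lambda f}\leq\frac{\lambda^2L_f^2}{2\kappa}H$. Writing $K(\lambda)=\lambda^{-1}\log H(\lambda)$, this becomes $K'(\lambda)\leq L_f^2/(2\kappa)$. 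Since $K(0^+)=\mathcal E_{\bh}f$, integrating gives $\log H(\lambda)\leq\lambda\mathcal E_{\bh}f+\lambda^2L_f^2/(2\kappa)$ for $\lambda>0$, and the case $\lambda<0$ follows by replacing $f$ with $-f$.
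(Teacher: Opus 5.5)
Your proposal is correct and takes the same route as the paper: the paper derives the Poincar\'e bound from the Brascamp--Lieb inequality \eqref{eq:BL} using $(D^2V_{\bh})^{-1}\leq\kappa^{-1}\one$, the logarithmic Sobolev inequality from the Bakry--\'Emery criterion, and \eqref{eq:Herbst} from the Herbst argument. The paper only cites \cite{BL76,BM85,Ledoux01}, so your semigroup computation and the differential inequality for $\lambda^{-1}\log H(\lambda)$ add detail, and the constants $1/\kappa$, $2/\kappa$ and $L_f^2/(2\kappa)$ are correct.
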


\subsection{Time discretization and Gaussian linearization}
To apply the results of the previous subsection to the Gibbs measure, we first need to discretize the functional integral~\eqref{eq:path-Gibbs}. This can be done in many ways: Trotterization as in Appendix~\ref{app:Trotter} or any $ L^2(0,1) $-approximation of the paths. For convenience, we will use the square-wave pulses from~\cite{MW25}. For their definition, we 
fix $D\in\N$ and let $ I_\alpha^D\coloneqq [(\alpha-1)/2^D,\alpha/2^D) $, 
\[
 e_\alpha^D=\sqrt{2^D}\,\one_{I_\alpha^D},
 \qquad \alpha=1,\dots,2^D.
\]
The $\big(e_\alpha^D\big) $ form an orthonormal system of step functions at scale
$2^{-D}$.  Let $Q_D$ be the associated orthogonal projection in the Hilbert space $ L^2(0,1) $ and set
\begin{equation}\label{def:discrete-path}
 \xi_j^D(\alpha)\coloneqq \inner{e_\alpha^D}{\xi_j},
 \qquad \xi_j^D \coloneqq (\xi_j^D(1),\dots,\xi_j^D(2^D)).
\end{equation}
Discretizing the action in the functional integral yields the approximate partition function
\begin{align}
 Z_N^D(J,\bh) \coloneqq \int\exp\Bigg\{
 &\frac\beta2\sum_{\alpha=1}^{2^D}\sum_{j,k=1}^N
 J_{jk}\xi_j^D(\alpha)\xi_k^D(\alpha)
 +\beta 2^{-D/2}\sum_{\alpha=1}^{2^D}\sum_{j=1}^Nh_j\xi_j^D(\alpha)
 \Bigg\}\nu_N(\dd\bxi).
 \label{eq:discrete-Z}
\end{align}
The longitudinal term is exact because the constant function belongs to
the range of $ Q_D$.
\begin{proposition}[Removal of the discretization]\label{prop:discrete-limit}
For fixed $N$ and $J$, 
\begin{equation}\label{eq:pressureconv}
\lim_{D\to \infty} \log Z_N^D(J,\bh) = \log Z_N(J,\bh) , 
\end{equation}
the convergence is uniform in $ \bh $ on any compact subset, and its first two derivatives with respect
to $\bh$ converge locally uniformly in $\bh$ to the corresponding quantities.  Moreover,
\begin{equation}\label{eq:overlap-discrete-limit}
 \lim_{D\to \infty} \Big\| S_N - \frac1N\sum_{j=1}^N|Q_D\xi_j\rangle\langle Q_D\xi_j|
 \Big\|_{\HS} = 0
\end{equation}
for $\nu_N$-almost every path configuration and in all
bounded Gibbs moments.
\end{proposition}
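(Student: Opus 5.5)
The plan is to compare the discretized and continuum actions pathwise and then pass the estimate through the integrals. Write the continuum action as
\[
\Phi(\bxi)=\frac\beta2\sum_{j,k}J_{jk}\inner{\xi_j}{\xi_k}+\beta\sum_j h_j\inner{1}{\xi_j} .
\]
Its discretization $\Phi_D$ is the same expression with $\inner{\xi_j}{\xi_k}$ replaced by $\inner{Q_D\xi_j}{Q_D\xi_k}=\sum_\alpha\xi_j^D(\alpha)\xi_k^D(\alpha)$. The longitudinal term is unchanged because $Q_D 1=1$, so $\beta 2^{-D/2}\sum_\alpha \xi_j^D(\alpha)=\beta\inner{1}{Q_D\xi_j}=\beta\inner{1}{\xi_j}$. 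Note also that, on the diagonal, the equal-time quadratic form in \eqref{eq:path-Gibbs} is $\int\xi_j(s)\xi_k(s)\dd s=\inner{\xi_j}{\xi_k}$.

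The first step is a pathwise bound. Since $\norm{\xi_j}_2=1$ and $Q_D$ is a contraction, we have
\[
|\Phi-\Phi_D|\le \frac\beta2\sum_{j,k}|J_{jk}|\bigl(\norm{\xi_j-Q_D\xi_j}_2+\norm{\xi_k-Q_D\xi_k}_2\bigr),
\]
and the $\bh$-dependence drops out entirely. Every càdlàg path with finitely many jumps (which holds $\nu_1$-a.s.) satisfies $\norm{\xi-Q_D\xi}_2^2\le 4\cdot(\#\text{jumps}+1)\,2^{-D}\to0$, since the path only differs from its cell average on cells containing a jump. Both actions are bounded by $\frac\beta2\sum|J_{jk}|+\beta\sum|h_j|$, so dominated convergence gives $Z_N^D\to Z_N$. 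The difference $\Phi-\Phi_D$ is independent of $\bh$ and bounded by a deterministic constant times a quantity tending to zero in $L^1(\nu_N)$. Hence the convergence is uniform on compact $\bh$-sets, even uniform in all $\bh$, via
\[
|Z_N^D/Z_N-1|\le \sup_{\bh}\Bigl\langle |e^{\Phi_D-\Phi}-1|\Bigr\rangle_{J,\bh}\le e^{c}\,\E_{\nu_N}\bigl[\,\cdot\,\bigr]\, e^{2\beta\sum|h_j|}.
\]
On compacts the Radon–Nikodym factor is bounded. Since $Z_N\ge e^{-C}>0$, the logarithms converge as well.

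For the derivatives, $\partial_{h_j}\log Z_N^D=\beta\langle \inner{1}{\xi_j}\rangle_D$ and the second derivatives are covariances of the bounded functionals $\beta\inner{1}{\xi_j}$ under the Gibbs measures with density proportional to $e^{\Phi_D}$. Each such expectation is a ratio $\int g\,e^{\Phi_D}\dd\nu_N\big/\int e^{\Phi_D}\dd\nu_N$ with $|g|\le\beta^2$. The same dominated convergence and uniform-in-$\bh$-on-compacts argument applies to numerator and denominator, which gives local uniform convergence of the gradient and Hessian. Alternatively, one can use that the functions are convex in $\bh$ with uniformly bounded third derivatives and invoke Arzelà–Ascoli.

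For the self-overlap, write $P_j=|\xi_j\rangle\langle\xi_j|$. Then
\[
\norm{P_j-|Q_D\xi_j\rangle\langle Q_D\xi_j|}_{\HS}\le 2\norm{\xi_j-Q_D\xi_j}_2\to0
\]
pathwise, so \eqref{eq:overlap-discrete-limit} holds $\nu_N$-a.s. The difference is bounded by $2$, so bounded convergence in the Gibbs measures (which are absolutely continuous with respect to $\nu_N$ with bounded density) gives convergence of all moments. This includes moments under the discretized Gibbs measures, by combining with the density convergence above.

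The main obstacle is the derivative convergence, specifically making the uniformity in $\bh$ honest. I would handle it by the observation that $\Phi-\Phi_D$ does not depend on $\bh$, so all error terms are controlled by $\E_{\nu_N}\norm{\xi_j-Q_D\xi_j}_2$ times bounded weights.
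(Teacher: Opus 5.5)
Your argument is correct and is essentially the paper's proof: pathwise $L^2$ convergence $Q_D\xi_j\to\xi_j$, a uniform bound on both actions that gives dominated convergence for $Z_N^D$ and for the ratio formulas of its first two $\bh$-derivatives, and the estimate $\norm{|Q_D\xi\rangle\langle Q_D\xi|-|\xi\rangle\langle\xi|}_{\HS}\le 2\norm{Q_D\xi-\xi}_2$ for the overlap. One caveat: your aside that the convergence is uniform in all of $\bh$ is not supported by your displayed bound, which carries the factor $e^{2\beta\sum_j|h_j|}$. The statement only asks for uniformity on compact sets, and that bound does give it, so nothing is lost.
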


\begin{proof}
Since $Q_D\xi_j\to\xi_j$ in $L^2(0,1)$,
\[
 \lim_{D\to \infty} \sum_{j,k=1}^N J_{jk}\inner{Q_D\xi_j}{Q_D\xi_k}
 =\sum_{j,k=1}^N J_{jk}\inner{\xi_j}{\xi_k}.
\]
The absolute value of either interaction is bounded by $N\norm J$, and the
longitudinal term is bounded uniformly on compact sets of $\bh$.  Dominated
convergence therefore applies to the partition function, which proves~\eqref{eq:pressureconv}, as well as to the first two
$\bh$-derivatives, whose additional factors are bounded.  Finally,
\[
 \norm{|Q_D\xi\rangle\langle Q_D\xi|-|\xi\rangle\langle\xi|}_{\HS}
 \leq2\norm{Q_D\xi-\xi}_2,
\]
which proves \eqref{eq:overlap-discrete-limit} and the moment convergence.
\end{proof}

Let $\bw=(w_{j,\alpha})$ be a centered Gaussian random vector on
$\R^N\otimes\R^{2^D}$ with covariance
\begin{equation}\label{eq:Gausslin0}
 \mathcal E[w_{j,\alpha}w_{k,\gamma}]
 =\frac1\beta J_{jk}\delta_{\alpha\gamma}
 =:\Sigma_{j\alpha,k\gamma}.
\end{equation}
Gaussian linearization allows us to write 
\begin{equation}\label{eq:Gausslin}
 Z_N^D(J,\bh)
 =\mathcal E\exp F_N^D(\bw+\widetilde\bh),
 \qquad \mbox{with} \; 
 (\widetilde\bh)_{j,\alpha}=2^{-D/2}h_j,
\end{equation}
and 
\begin{equation}\label{def:FD}
 F_N^D(\bw)
 =\sum_{j=1}^N\log\int
 \exp\left(\beta\sum_{\alpha=1}^{2^D}w_{j,\alpha} \ \xi^D(\alpha)\right)
 \nu_1(\dd\xi).
\end{equation}
Denoting by 
$
  \widehat w_j(s)
  :=\sum_{\alpha=1}^{2^D}w_{j,\alpha}e_\alpha^D(s)
  =2^{D/2}\sum_{\alpha=1}^{2^D}w_{j,\alpha}\one_{I_\alpha^D}(s) $ such that $   \sum_{\alpha=1}^{2^D}w_{j,\alpha}\xi_j^D(\alpha)
  =\langle \widehat w_j,\xi_j\rangle $, 
the second partial derivatives of this quantity are 
$$
\frac{\partial^2 F_N^D}{\partial w_{j,\alpha}  \ \partial w_{j',\alpha'} }(\mathbf{w})   =  \beta^2 \  \delta_{j,j'} \ \langle e_\alpha^D ,  \mu_{\widehat w_j} e_{\alpha'}^D \rangle  . $$
Using the Ding-Song-Sun inequality, Proposition~\ref{prop:DSS}, we  conclude that in the sense of quadratic forms:
\begin{equation}\label{eq:FD-Hessian}
D^2  F_N^D(\mathbf{w}) \leq \beta^2 \frac{\tanh(\beta b)}{\beta b} \mathbbm{1} = \gamma  \mathbbm{1} , \quad \mbox{with} \quad \gamma \coloneqq  \frac{\beta \tanh(\beta b)}{b} .
\end{equation}

\subsection{Proof of the Hessian bound}\label{sub:Hessian}

\begin{proof}[Proof of Theorem~\ref{thm:Hessian}]
The covariance in \eqref{eq:Gausslin0} is of the form
$ \Sigma=\beta^{-1}J\otimes\one_{2^D} $. 
Consequently, $\gamma\norm \Sigma=\norm{J} \tanh(\beta b)/b<1$, so we can apply 
Lemma~\ref{lem:gaussian-convolution} to \eqref{eq:Gausslin}.  Since the map
$\bh\mapsto\widetilde\bh$ is an isometry from $\R^N$ into the tensor product,
we obtain
\begin{align*}
 D_{\bh}^2\log Z_N^D(J,\bh)
 &\leq\gamma\bigl(\one-\gamma\beta^{-1}J\bigr)^{-1}\\
 &\leq\frac{\gamma}{1-\gamma\beta^{-1}\norm{J}}\one
 =\frac{\beta\tanh(\beta b)}{b-\norm{J}\tanh(\beta b)}\one ,
\end{align*}
where the second inequality is by the operator-monotonicity of the inverse. 
Proposition~\ref{prop:discrete-limit} allows us to take the limit $D\to\infty$ and proves
\eqref{eq:Hessian}.
\end{proof}

\subsection{Auxiliary results}

The proof of our concentration estimates is based on the following convenient form of a concentration result in Hilbert spaces due to Pinelis~\cite{Pin94}.

\begin{proposition}[Pinelis]\label{prop:Pinelis}
Let $X_1,\dots,X_N$ be independent centered random variables in a real Hilbert
space.  Then, for every $t>0$,
\begin{equation}\label{eq:Pinelis}
 \E\exp\left(t \ \Big\|\sum_{j=1}^NX_j\Big\|\right)
 \leq2\exp\left(\frac{t^2}{2}\sum_{j=1}^N
 \E\bigl[\norm{X_j}^2e^{t\norm{X_j}}\bigr]\right).
\end{equation}
\end{proposition}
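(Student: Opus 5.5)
We prove Proposition~\ref{prop:Pinelis} with the standard martingale argument of Pinelis, adapted to the Hilbert-space norm. Set $S_k \coloneqq \sum_{j\leq k} X_j$, $S_0=0$. Instead of $e^{t\norm{S}}$, which is not smooth at the origin, we work with the even function $u(x)\coloneqq\cosh(t\norm{x})$. It is smooth on the Hilbert space and satisfies $e^{t\norm x}\leq 2\cosh(t\norm x)$; this is where the factor $2$ comes from. It therefore suffices to show
\[
\E\cosh(t\norm{S_N})\leq\exp\Big(\frac{t^2}{2}\sum_{j=1}^N\E\bigl[\norm{X_j}^2e^{t\norm{X_j}}\bigr]\Big).
\]
We prove this by induction on $k$, conditioning on $S_{k-1}$ and using independence.

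The key step is a one-step estimate. For fixed $x$ and a centered $Y$ independent of $x$, we claim
\begin{equation*}
\E\cosh(t\norm{x+Y})\leq\cosh(t\norm x)\Bigl(1+\tfrac{t^2}{2}\E\bigl[\norm Y^2e^{t\norm Y}\bigr]\Bigr)\leq \cosh(t\norm x)\exp\Bigl(\tfrac{t^2}{2}\E\bigl[\norm Y^2e^{t\norm Y}\bigr]\Bigr).
\end{equation*}
To get this, Taylor expand $\phi(\theta)=\cosh(t\norm{x+\theta Y})$ to second order in $\theta\in[0,1]$. The first-order term is linear in $Y$ and vanishes after taking expectation, because $Y$ is centered. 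For the second derivative, write $r=\norm{x+\theta Y}$ and compute the Hessian of $\cosh(t\norm z)$. Along a direction $y$ it equals
\[
t^2\cosh(tr)\frac{\langle z,y\rangle^2}{r^2}+t\frac{\sinh(tr)}{r}\Bigl(\norm y^2-\frac{\langle z,y\rangle^2}{r^2}\Bigr).
\]
Since $\sinh(s)/s\leq\cosh s$, this is bounded by $t^2\cosh(t\norm z)\norm y^2$. Along the segment we have $\cosh(t\norm{x+\theta Y})\leq\cosh(t\norm x)e^{t\norm Y}$, because $\cosh(a+b)\leq\cosh a\, e^{|b|}$ and $\norm{x+\theta Y}\leq \norm x+\norm Y$ (using that $\cosh$ is increasing on $[0,\infty)$). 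The remainder $\int_0^1(1-\theta)\phi''(\theta)\dd\theta$ is then at most $\tfrac12 t^2\cosh(t\norm x)\norm Y^2e^{t\norm Y}$. Finally, $1+a\leq e^a$ gives the second inequality.

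With the one-step estimate in hand, we condition on $X_1,\dots,X_{k-1}$ and apply it with $x=S_{k-1}$ and $Y=X_k$. This yields
\[
\E\cosh(t\norm{S_k})\leq \E\cosh(t\norm{S_{k-1}})\exp\Bigl(\tfrac{t^2}{2}\E\norm{X_k}^2e^{t\norm{X_k}}\Bigr).
\]
Iterating from $\cosh(0)=1$ gives the displayed bound, and multiplying by $2$ gives \eqref{eq:Pinelis}.

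We expect the main obstacle to be the uniform Hessian bound for $z\mapsto\cosh(t\norm z)$ in infinite dimensions. One has to check differentiability at $z=0$, where $\cosh(t\norm z)=\sum_n t^{2n}\norm z^{2n}/(2n)!$ is a power series in $\norm z^2$ and hence smooth. One also has to control the transversal term via $\sinh(s)/s\leq\cosh s$. Integrability is not an issue: if $\E\norm{X_j}^2e^{t\norm{X_j}}=\infty$ the claim is trivial, and otherwise all expectations used above are finite by the same bounds.
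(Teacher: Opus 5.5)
Your proposal is correct and follows the same route as the paper's sketch of Pinelis' argument. Both use a second-order Taylor expansion of $\cosh(t\norm{x+sy})$ with the bound $t^2\norm{y}^2e^{t\norm y}\cosh(t\norm x)$ on the second derivative, the vanishing of the first-order term by centering, iteration with $1+u\leq e^u$, and $e^a\leq 2\cosh a$ at the end. Your observation that $\cosh(t\norm z)$ is a power series in $\norm z^2$, and hence smooth at the origin, handles the points $x+sy=0$ slightly more cleanly than the approximation argument the paper invokes.
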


\begin{proof}[Sketch of the proof of Thm.~3.2 in \cite{Pin94}]
For fixed vectors $x,y$, the function
$g(s)=\cosh(t\norm{x+sy})$ satisfies, wherever differentiable,
\[
 g''(s)\leq t^2\norm y^2e^{t\norm y}\cosh(t\norm x),
 \qquad 0\leq s\leq1.
\]
The inequality extends through the points $x+sy=0$ by approximation.  Conditioning
on $X_1,\dots,X_{N-1}$ and applying Taylor's formula in $s$ with
$x=\sum_{j<N}X_j$ and $y=X_N$, the conditional expectation of $g'(0)$ vanishes
because $\E X_N=0$.  Iteration and $1+u\leq e^u$ give the right-hand side of
\eqref{eq:Pinelis} with $\cosh$ in place of the exponential.  Finally, one uses
$e^a\leq2\cosh a$ for $a\geq0$.
\end{proof}

In applications of the Poincar\'e inequality, we will also need the following cross-covariance estimate.
\begin{proposition}[Covariance bound]\label{prop:covariance}
Let $\mathcal H$ and $\mathcal K$ be real Hilbert spaces, and let
$Y\in L^2(\Omega;\mathcal H)$ and $U\in L^2(\Omega;\mathcal K)$ be centered, Hilbert-space-valued random variables on a probability space $ \Omega $, and 
set
 $ \Sigma_Y\coloneqq\E|Y\rangle\langle Y| $, $  \Sigma_U\coloneqq\E|U\rangle\langle U| $. 
We define $T:\mathcal H\to\mathcal K$ by
$  T\varphi\coloneqq\E\bigl[U\,\inner{Y}{\varphi}\bigr] $. 
Then
\begin{equation}\label{eq:cross}
 TT^*\leq\norm{\Sigma_Y}\,Sigma_U,
 \qquad
 \tr TT^* = \tr T^* T
 \leq\norm{\Sigma_Y}\,\E\norm{U}_{\mathcal K}^2.
\end{equation}
\end{proposition}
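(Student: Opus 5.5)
The plan is to reduce both inequalities to one operator inequality between $\mathcal H$-side and $\mathcal K$-side quadratic forms, via Cauchy--Schwarz in $L^2(\Omega)$. Fix $\psi\in\mathcal K$. Since $T^*\psi=\E[\inner{U}{\psi}\,Y]$, we have, for any $\varphi\in\mathcal H$,
\[
\inner{\psi}{T\varphi}=\E\bigl[\inner{U}{\psi}\inner{Y}{\varphi}\bigr].
\]
Cauchy--Schwarz in $L^2(\Omega)$ then gives
\[
|\inner{\psi}{T\varphi}|^2\leq\E\inner{U}{\psi}^2\,\E\inner{Y}{\varphi}^2=\inner{\psi}{\Sigma_U\psi}\,\inner{\varphi}{\Sigma_Y\varphi}\leq\norm{\Sigma_Y}\,\norm{\varphi}^2\,\inner{\psi}{\Sigma_U\psi}.
\]
Taking the supremum over unit $\varphi$ yields $\norm{T^*\psi}^2\leq\norm{\Sigma_Y}\inner{\psi}{\Sigma_U\psi}$. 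This is exactly $\inner{\psi}{TT^*\psi}\leq\norm{\Sigma_Y}\inner{\psi}{\Sigma_U\psi}$, i.e.\ the first claim $TT^*\leq\norm{\Sigma_Y}\Sigma_U$ in the sense of quadratic forms. Centering is not actually used; it only matters for interpreting $\Sigma_Y,\Sigma_U$ as covariances.

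For the trace bound, note first that $\tr TT^*=\tr T^*T$ holds as an identity in $[0,\infty]$ for any bounded $T$: both sides equal $\sum_{m,n}|\inner{f_m}{Te_n}|^2$ for orthonormal bases $(e_n)$ of $\mathcal H$ and $(f_m)$ of $\mathcal K$, by Parseval and Tonelli. Next, take the trace of the first inequality over an orthonormal basis $(f_m)$ of $\mathcal K$. By monotone convergence,
\[
\tr\Sigma_U=\sum_m\E\inner{U}{f_m}^2=\E\norm{U}_{\mathcal K}^2,
\]
so $\tr TT^*\leq\norm{\Sigma_Y}\,\E\norm{U}^2$. Here $\Sigma_U$ is trace class because $U\in L^2$, so everything is finite.

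The only point needing care is well-definedness: $T$ and $\Sigma_Y$ must be bounded. This follows because $\norm{T\varphi}\leq\E[\norm{U}\,|\inner{Y}{\varphi}|]\leq(\E\norm{U}^2)^{1/2}(\E\norm{Y}^2)^{1/2}\norm{\varphi}$. The Bochner expectations exist for the same reason, and the monotone convergence interchanges above are justified by nonnegativity. I expect no real obstacle here.
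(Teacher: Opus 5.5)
Your proof is correct and matches the paper's route: Cauchy--Schwarz in $L^2(\Omega)$ applied to $\inner{\psi}{T\varphi}=\E[\inner{U}{\psi}\inner{Y}{\varphi}]$, then a supremum over unit $\varphi$ giving $\norm{T^*\psi}^2\leq\norm{\Sigma_Y}\inner{\psi}{\Sigma_U\psi}$, then taking the trace. Your remarks on the boundedness of $T$, the identity $\tr TT^*=\tr T^*T$, and $\tr\Sigma_U=\E\norm{U}_{\mathcal K}^2$ just make explicit details the paper leaves implicit.
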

\begin{proof}
For $v\in\mathcal K$, the Cauchy-Schwarz inequality implies  the norm-bound
\begin{align*}
 \norm{T^*v}_{\mathcal H}^2
 &=\sup_{\norm{\varphi}_{\mathcal H}=1}
   \left|\E\bigl[\langle U,v\rangle\langle Y,\varphi\rangle\bigr]\right|^2 \leq\sup_{\norm{\varphi}_{\mathcal H}=1}
   \E\langle U,v\rangle^2\,\E\langle Y,\varphi\rangle^2\\
 &\leq\norm{\Sigma_Y}\,\langle v,\Sigma_Uv\rangle.
\end{align*}
This proves the operator inequality. Taking its trace gives the Hilbert--Schmidt bound.
\end{proof}

\subsection{Proof of self-overlap concentration}\label{sec:self-proof}

\begin{proof}[Proof of Theorem~\ref{thm:concentration1} -- first part.] 
By replacing $J$ with $J+\eta\one$, which leaves the Gibbs state unchanged, and then sending $\eta\downarrow0$, it is enough to
prove the estimates below for $J>0$.

For the discretized system,  we abbreviate 
\begin{equation}\label{def:XD}
 X_j^D \coloneqq |\xi_j^D\rangle\langle\xi_j^D|,
 \qquad S_N^D=\frac1N\sum_{j=1}^NX_j^D.
\end{equation}
Conditioning in \eqref{eq:Gausslin} produces independent one-site measures
\begin{equation}\label{def:inner-average}
 \Av_{w_j}(\cdot)
 =\frac{\int (\cdot)e^{\beta\inner{w_j}{\xi^D}}\nu_1(\dd\xi)}
 {\int e^{\beta\inner{w_j}{\xi^D}}\nu_1(\dd\xi)},
\end{equation}
and, after the change of variables $\bw\mapsto\bw+\widetilde\bh$, an
outer probability measure $\mathcal E_{\bh}^D$ proportional to
\begin{equation}\label{eq:outer-measure}
 \exp\left[-\frac12\inner{\bw-\widetilde\bh}{\Sigma^{-1}(\bw-\widetilde\bh)}
 +F_N^D(\bw)\right]\dd\bw.
\end{equation}
By~\eqref{eq:FD-Hessian}, its convexity constant is
\begin{equation}\label{def:alpha}
 \kappa=\frac{\beta}{ \norm{J} }-\frac{\beta\tanh(\beta b)}{b}
 =\frac{\beta\bigl(b-\norm{J}\tanh(\beta b)\bigr)}{b\norm{J}} . 
\end{equation}

We will be interested in the inner average of the discretized self-overlap, which defines the following $\R^{2^D\times 2^D }$-valued function of $ \bw $:
\begin{equation}\label{def:AN}
 A_N^D(\bw)\coloneqq \Av_{\bw}(S_N^D)
 =\frac1N\sum_{j=1}^N\Av_{w_j}(X_j^D) .  
\end{equation}
For an application of Proposition~\ref{prop:covariance}, we identify $\mathbb R^{2^D}$ with $\mathcal H_D\coloneqq\Ran Q_D$ and set
$\mathcal K_D\coloneqq\mathcal S_2(\mathcal H_D)$ for the Hilbert space of Hilbert-Schmidt operators on $\mathcal H_D$. For fixed $w_j$, we define the
cross-covariance operator $T_j^D(w_j):\mathcal H_D\to\mathcal K_D$ by
\[
 T_j^D(w_j)\varphi
 \coloneqq\Av_{w_j}\!\left[
  \bigl(X_j^D-\Av_{w_j}X_j^D\bigr)
  \inner{\xi_j^D-\Av_{w_j}\xi_j^D}{\varphi}
 \right].
\]
Its input covariance is $Q_D\mu_{\widehat w_j}Q_D$, for which by
Proposition~\ref{prop:DSS},
\[
 \norm{Q_D\mu_{\widehat w_j}Q_D}
 \leq\norm{\mu_0}
 \leq\frac{\tanh(\beta b)}{\beta b}.
\]
Moreover, $\| X_j^D\|_{\HS}\leq1$, and hence 
$  \Av_{w_j}\norm{X_j^D-\Av_{w_j}X_j^D}_{\HS}^2\leq1 $. 
The cross-covariance bound~\eqref{eq:cross} therefore gives, uniformly in $w_j$,
\[
  \norm{T_j^D(w_j)}_{\HS(\mathcal H_D,\mathcal K_D)}^2 =  \tr  T_j^D(w_j)^* T_j^D(w_j)  
 \leq\norm{\mu_0}
 \leq\frac{\tanh(\beta b)}{\beta b}.
\]
Using the scalar Poincar\'e inequality componentwise in an orthonormal basis of
$\mathcal K_D$, we consequently obtain
\begin{align}
\mathcal E_{\bh}^D\!\left[
 \norm{A_N^D-\mathcal E_{\bh}^D[A_N^D]}_{\HS}^2\right]
 &\leq\frac{\beta^2}{\kappa N^2}\sum_{j=1}^N
 \mathcal E_{\bh}^D\!\left[
  \norm{T_j^D(w_j)}_{\HS(\mathcal H_D,\mathcal K_D)}^2\right]\notag\\
 &\leq\frac{\beta^2}{\kappa N}\norm{\mu_0}
 \leq\frac{\beta}{\kappa N}\frac{\tanh(\beta b)}{b}
 =\frac{r(\beta,b;\norm{J})}{N}.
 \label{eq:outer-variance}
\end{align}
Here, we recall the parameter $r(\beta,b;\norm{J})$, which we abbreviate in the following by $r$, from the statement of Theorem~\ref{thm:concentration1}.

For the remainder of the proof, we investigate
\[
  f_N^D(\bw)\coloneqq\norm{U_N^D(\bw)}_{\HS},
 \qquad. U_N^D(\bw)\coloneqq A_N^D(\bw)-\mathcal E_{\bh}^D[A_N^D].
\]
At points where $f_N^D(\bw)>0$, differentiation gives
\[
 D_{w_j}f_N^D
 =\frac{\beta}{Nf_N^D}(T_j^D)^*U_N^D.
\]
The cross-covariance bound~\eqref{eq:cross} and $\| X_j^D \|_{\HS}\leq1$ imply
\[
 \norm{(T_j^D)^*U_N^D}_2^2
 \leq\norm{\mu_0}\,
 \Var_{\Av_{w_j}}\!\left(\langle U_N^D,X_j^D\rangle_{\HS}\right)
 \leq\norm{\mu_0}\,\norm{U_N^D}_{\HS}^2 , 
\]
such that 
\begin{align}\label{eq:diffLSI}
\sum_{j,\alpha}\left|\frac{\partial f_N^D}{\partial w_{j,\alpha}}(\bw)\right|^2
 &\leq\frac{\beta^2}{N}\norm{\mu_0}
 \leq\frac{\beta}{N}\frac{\tanh(\beta b)}{b}.
\end{align}
The estimate extends to the points where $f_N^D=0$ by almost-everywhere differentiability of the norm.
Hence, by Herbst's estimate~\eqref{eq:Herbst} for any $ t \geq 0 $
\begin{equation}\label{eq:outer-mgf}
 \mathcal E_{\bh}^D\exp\left( tNf_N^D\right) 
 \leq\exp\left(t\sqrt{Nr}+\frac{Nt^2r}{2}\right).
\end{equation}

Conditioned on $\bw$, the centered variables
$Y_j=X_j^D-\Av_{w_j}(X_j^D)$ are independent,
$\norm{Y_j}_{\HS}\leq2$, and
$\Av_{w_j}\norm{Y_j}_{\HS}^2\leq1$.  Pinelis bound, Proposition~\ref{prop:Pinelis}, gives
\begin{equation}\label{eq:inner-mgf}
 \Av_{\bw}\left[ \exp\left(tN\norm{S_N^D-A_N^D}_{\HS}\right)\right]
 \leq2\exp\left(\frac{Nt^2e^{2t}}2\right).
\end{equation}
The bound is uniform in $\bw$.  Combining \eqref{eq:outer-mgf} and
\eqref{eq:inner-mgf}, using the triangle inequality, and then taking
$D\to\infty$ yields, for $0<t\leq\tfrac12$,
\begin{equation}\label{eq:self-mgf}
 \left\langle e^{tN\norm{S_N-\langle S_N\rangle}_{\HS}}\right\rangle_{J,\bh}
 \leq2\exp\left(t\sqrt{Nr}+\frac{Nt^2}{2}(e+r)\right).
\end{equation}
The exponential Markov inequality  and the choice
$t=\min\{\epsilon/(e+r),1/2\}$ prove \eqref{eq:soconc}.

For the second moment, we use the conditional variance decomposition and
\eqref{eq:outer-variance}.  Since we also have $
 \Av_\bw\norm{S_N^D-A_N^D}_{\HS}^2\leq N^{-1} $, we conclude 
\[
 \left\langle\norm{S_N^D-\langle S_N^D\rangle}_{\HS}^2\right\rangle_{J,\bh}
 \leq\frac{1+r}{N}.
\]
Letting $D\to\infty$ proves \eqref{eq:self-variance}.
\end{proof}

It remains to establish  \eqref{eq:sharp-self-bound}.  At $\bh=\mathbf{0} $ the function
\begin{equation}\label{def:CN}
 C_N(u)=\langle S_N(u,0)\rangle_{J,0}
 =\frac1N\sum_{j=1}^N\langle S_j^z(u)S_j^z\rangle_{J,0}
\end{equation}
is non-negative, time-translation invariant, and has the KMS spectral
representation \cite[Ch.~5]{BR1997}
\begin{equation}\label{eq:KMSrep}
 C_N(u)=\int_0^\infty K_x(u)\,\rho_N(\dd x),
 \qquad
 K_x(u)=\frac{\cosh(x(1/2-u))}{\cosh(x/2)},
\end{equation}
where $\rho_N$ is a probability measure.

\begin{lemma}[Square of a KMS autocorrelation]\label{lem:autocorr-square}
Let $C(u)=\int K_x(u)\rho(\dd x)$ with a finite positive measure satisfying
$\rho([0,\infty))\leq1$.  If $q=\int_0^1C(u)\dd u$, then 
\begin{equation}\label{eq:autocorr-square}
 \int_0^1C(u)^2\dd u\leq\frac q2+q^2.
\end{equation}
\end{lemma}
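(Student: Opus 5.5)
The plan is to expand both sides of \eqref{eq:autocorr-square} in the spectral variables and reduce the claim to a pointwise inequality for a two-variable kernel. Write
\[
 g(x)\coloneqq\int_0^1K_x(u)\dd u=\frac{2\tanh(x/2)}{x},\qquad g(0)=1,
\]
so that by Fubini $q=\int g(x)\,\rho(\dd x)$. Likewise
\[
 \int_0^1C(u)^2\dd u=\iint M(x,y)\,\rho(\dd x)\rho(\dd y),\qquad M(x,y)\coloneqq\int_0^1K_x(u)K_y(u)\dd u .
\]
It then suffices to prove the kernel bound
\begin{equation*}
 M(x,y)\leq\tfrac14\bigl(g(x)+g(y)\bigr)+g(x)g(y)\qquad\text{for all }x,y\geq0 .
\end{equation*}
Integrating it against $\rho\otimes\rho$ and using $\rho([0,\infty))\leq1$ gives $\frac14\cdot 2q\,\rho([0,\infty))+q^2\leq \frac q2+q^2$, which is the claim.

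To evaluate $M$, use the product formula
\[
 \cosh a\cosh b=\tfrac12\bigl(\cosh(a+b)+\cosh(a-b)\bigr)
\]
with $a=x(\tfrac12-u)$ and $b=y(\tfrac12-u)$. Integrating in $u$ gives
\[
 M(x,y)=w\,g(x+y)+(1-w)\,g(|x-y|),\qquad w\coloneqq\frac{\cosh((x+y)/2)}{2\cosh(x/2)\cosh(y/2)}\in[\tfrac12,1].
\]
The weights sum to one because $\cosh((x+y)/2)+\cosh((x-y)/2)=2\cosh(x/2)\cosh(y/2)$. Two sanity checks support the target inequality. At $y=0$ one has $M=g(x)$, while the right side is $\tfrac14(g(x)+1)+g(x)\ge g(x)$. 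At $x=y$ one has $M=\tfrac12 g(2x)(1+\operatorname{sech}^2(x/2))+\tfrac12\operatorname{sech}^2(x/2)$; for large $x$ this is about $1/(2x)$, while the right side is about $1/x$. So the constant $\tfrac14$ is the natural one, matching the $1/(2\beta b)$ asymptotics in \eqref{eq:one-spin-HS}.

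The main obstacle is the kernel inequality itself, and I would split it into two pieces.
\begin{enumerate}
\item For the first term, $g(x+y)\leq\min\{g(x),g(y)\}\leq\tfrac12(g(x)+g(y))$ since $g$ is decreasing. However, $w$ can be close to $1$, so this alone only gives $\tfrac12(g(x)+g(y))$ rather than $\tfrac14$. Hence the argument has to use the finer identity
\[
 w=\tfrac12\bigl(1+\tanh(x/2)\tanh(y/2)\bigr).
\]
I would try $w\,g(x+y)\leq\tfrac12 g(x+y)+\tfrac12\tanh\tfrac x2\tanh\tfrac y2\,g(x+y)$. Then the bound $g(x+y)\le \tfrac12(g(x)+g(y))$ controls the $\tfrac12$-part with constant $\tfrac14$. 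The product part is controlled via $\tanh\tfrac x2\tanh\tfrac y2\le \tfrac{xy}{4}g(x)g(y)$ together with $\tfrac{xy}{2}\,g(x+y)\leq \tfrac{xy}{x+y}\le$ a constant times the relevant terms.
\item For the second term, the factor $1-w=\tfrac12(1-\tanh\tfrac x2\tanh\tfrac y2)$ decays like $e^{-\min(x,y)}$. Together with $g(|x-y|)\leq 1$, it should be dominated by $g(x)g(y)$, which behaves like $4/(xy)$.
\end{enumerate}

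If these elementary estimates do not close uniformly, I would fall back on a case distinction. In the regime where $x$ and $y$ are both $\le 1$, I would Taylor expand. In the regime where one variable is large, I would use asymptotics plus monotonicity in each variable. The remaining compact middle region would be checked by explicit monotone bounds on $\tanh$.
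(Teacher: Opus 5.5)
\textbf{There is a genuine gap.} Your reduction is the same as the paper's: both proofs integrate the kernel bound $M(x,y)\le\tfrac14\bigl(g(x)+g(y)\bigr)+g(x)g(y)$ against $\rho\otimes\rho$. That kernel bound is the entire content of the lemma, and you do not prove it. The route you sketch for it cannot work.

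The root cause is a slip in your diagonal check. By your own identity $w=\tfrac12(1+\tanh\tfrac x2\tanh\tfrac y2)$, at $x=y$ the weight is $\tfrac12(1+\tanh^2\tfrac x2)$, not $\tfrac12(1+\operatorname{sech}^2\tfrac x2)$. Using $\tanh x=2\tanh\tfrac x2/(1+\tanh^2\tfrac x2)$, this gives $M(x,x)=\tfrac12 g(x)+\tfrac12\operatorname{sech}^2\tfrac x2\sim 1/x$. So the kernel bound is asymptotically sharp along the diagonal: both sides are $\sim 1/x$, and the slack is only of order $x^{-2}$, not a factor two.

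Step 1 therefore fails. Once the whole term $\tfrac14(g(x)+g(y))$ is spent on $\tfrac12 g(x+y)$ via monotonicity, the remaining term $\tfrac12\tanh\tfrac x2\tanh\tfrac y2\,g(x+y)\sim(x+y)^{-1}$ would have to be absorbed by $g(x)g(y)\sim 4/(xy)$. That is impossible: the ratio is $\tfrac{xy}{4(x+y)}\tanh\tfrac{x+y}2$, which tends to infinity as $\min\{x,y\}\to\infty$. Already at $x=y=10$ the term is $\approx 0.050$, while $g(10)^2\approx 0.040$. Your bound $\tfrac{xy}{2}g(x+y)\le\tfrac{xy}{x+y}$ is unbounded, so no constant rescues this step.

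The fallback case analysis does not close the gap either. The critical regime, with both variables large near the diagonal, is neither compact nor a ``one variable large'' regime. Handling it would need asymptotics uniform in $x/y$ and accurate to second order.

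Step 2 is true, but not for the reason you give. The bound $g(|x-y|)\le 1$ is too crude, because $1-w\not\to0$ as $x\to\infty$ with $y$ fixed. What works is that $\sinh t/t$ is increasing, which gives $(1-w)g(|x-y|)=\frac{\sinh(|x-y|/2)}{|x-y|\cosh(x/2)\cosh(y/2)}\le\tfrac12 g(x)g(y)$.

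The missing idea is to avoid the off-diagonal kernel entirely, as the paper does:
\begin{itemize}
\item By Cauchy--Schwarz in $u$, $M(x,y)\le\sqrt{M(x,x)M(y,y)}$.
\item On the diagonal, $\tfrac12\operatorname{sech}^2\tfrac x2\le g(x)^2$ is equivalent to $t^2\le 2\sinh^2 t$ with $t=x/2$. Hence $M(x,x)\le\tfrac12 g(x)+g(x)^2$.
\item Finally, $\sqrt{(\tfrac a2+a^2)(\tfrac b2+b^2)}\le\tfrac{a+b}4+ab$ follows from the identity $(\tfrac{a+b}4+ab)^2-(\tfrac a2+a^2)(\tfrac b2+b^2)=\tfrac{(a-b)^2}{16}$.
\end{itemize}
This reduces everything to a one-variable inequality with room at order $g^2$. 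For large $x,y$ the leading-order content of the kernel bound is the AM--HM inequality $\tfrac{2}{x+y}\le\tfrac12(\tfrac1x+\tfrac1y)$. The geometric-mean step captures exactly this sharp leading order, which the monotonicity bound $g(x+y)\le\min\{g(x),g(y)\}$ loses.
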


\begin{proof}
Abbreviating $c(x)\coloneqq \int_0^1K_x(u)\dd u=2\tanh(x/2)/x$, with $c(0)=1$, a computation
 gives
\[
 \int_0^1K_x(u)^2\dd u
 =\frac12c(x)+\frac12\operatorname{sech}^2(x/2)
 \leq\frac12c(x)+c(x)^2.
\]
The Cauchy--Schwarz inequality and the identity
\[
 \left(\frac{a+b}{4}+ab\right)^2
 -\left(\frac a2+a^2\right)\left(\frac b2+b^2\right)
 =\frac{(a-b)^2}{16}
\]
show that
$\int_0^1K_x(u) K_y(u) \dd u\leq(c(x)+c(y))/4+c(x)c(y)$.  Integrating in $x,y$ against the measure 
$\rho$ proves \eqref{eq:autocorr-square}.
\end{proof}

We are now ready to complete the proof of Theorem~\ref{thm:concentration1}.
\begin{proof}[Proof of Theorem~\ref{thm:concentration1} -- second part.] 
By \eqref{eq:Duhamel-def}, \eqref{eq:nohmag}, and Theorem~\ref{thm:Hessian},
\begin{equation}\label{eq:CN-integral}
 \int_0^1C_N(u)\dd u
 =\frac1N\sum_{j=1}^N(S_j^z,S_j^z)_{J,0}
 \leq m(\beta,b;\norm{J}).
\end{equation}
Applying Lemma~\ref{lem:autocorr-square} proves
\eqref{eq:sharp-self-bound} and completes the proof of
Theorem~\ref{thm:concentration1}.
\end{proof}

\subsection{Proof of replica-overlap concentration}\label{sec:replica-proof}

\begin{proof}[Proof of  Theorem~\ref{thm:repconc}]
We spell out the argument after discretization and then let $D\to\infty$. For brevity, write
$r=r(\beta,b;\norm{J})$ and $\mathfrak m=\mathfrak m(\beta,b;\norm{J})$ and set
\[
 \bar\xi_j^D\coloneqq\langle\xi_j^D\rangle_{J,\bh}^{(D)},
 \qquad
 u_j(w_j)\coloneqq\Av_{w_j}(\xi_j^D)-\bar\xi_j^D,
 \qquad
 u_{j,\alpha}(w_j)\coloneqq\langle e_\alpha^D,u_j(w_j)\rangle
\]
with the Gibbs state $\langle\cdot\rangle_{J,\bh}^{(D)}$ of the discretized model.
Given two independent outer fields $\bw=(\bw^{(1)},\bw^{(2)})$, the conditional mean of
the centered replica overlap is defined by the kernel
\begin{equation}\label{def:replica-conditional-mean}
 B_N^D(\bw)(\alpha,\alpha')
 \coloneqq\frac1N\sum_{j=1}^N
 u_{j,\alpha}(w_j^{(1)})u_{j,\alpha'}(w_j^{(2)}).
\end{equation}
Conditioned on the two outer fields, we set
\[
 Z_j\coloneqq
 \bigl(\xi_j^{D,(1)}-\bar\xi_j^D\bigr)\otimes
 \bigl(\xi_j^{D,(2)}-\bar\xi_j^D\bigr)
 -u_j(w_j^{(1)})\otimes u_j(w_j^{(2)}).
\]
Then the $(Z_j)$ are independent, centered $\mathcal K_D$-valued random variables and
\[
 C_N^{D,(1,2)}-B_N^D=\frac1N\sum_{j=1}^N Z_j.
\]
Since $\|\xi_j^D-\bar\xi_j^D\|_2\leq2$, one has $
 \norm{Z_j}_{\HS}\leq8 $ and 
$
 \Av_{\bw}^{\otimes}\norm{Z_j}_{\HS}^2\leq16 $. 
Proposition~\ref{prop:Pinelis} therefore gives
\begin{equation}\label{eq:replica-inner-mgf}
 \Av^\otimes_{\bw}\left[
 \exp\left(tN\norm{C_N^{D,(1,2)}-B_N^D}_{\HS}\right)\right]
 \leq2e^{8Nt^2e^{8t}}.
\end{equation}

The product outer measure $\mathcal E_{\bh}^{D,\otimes}$ satisfies the tensorized logarithmic Sobolev inequality with the same constant $\kappa$ from~\eqref{def:alpha}. We set $f_N^D=\norm{B_N^D}_{\HS}$ and abbreviate
$M_j^{(a)}\coloneqq Q_D\mu_{\widehat w_j^{(a)}}Q_D $ with $ a\in\{1,2\}$.
Then
\[
 D_{w_j^{(a)}}u_j(w_j^{(a)})=\beta M_j^{(a)},
 \qquad \norm{M_j^{(a)}}\leq\norm{\mu_0},
\]
where we used Proposition~\ref{prop:DSS} again. 
Viewing $B_N^D$ as an operator from the second copy of $\mathcal H_D$ to the first,
differentiation in the first outer field gives, whenever $f_N^D>0$,
\begin{align}
\sum_{j,\alpha}\left|\frac{\partial f_N^D}{\partial w_{j,\alpha}^{(1)}}\right|^2
 &=\frac{\beta^2}{N^2(f_N^D)^2}\sum_{j=1}^N
   \norm{M_j^{(1)}B_N^D u_j(w_j^{(2)})}_2^2\notag\\
 &\leq\frac{4\beta^2}{N}\norm{\mu_0}^2
 \leq\frac{4\beta}{N}\frac{\tanh(\beta b)}{b}.
 \label{eq:replica-gradient-first}
\end{align}
Here we used $\norm{B_N^D u}_2\leq\norm{B_N^D}_{\HS}\norm{u}_2$,
$\|u_j(w_j^{(2)})\|_2\leq2$, and $\norm{\mu_0}\leq1$. The same estimate holds
for the second copy, and therefore
\begin{equation}\label{eq:replica-gradient}
 \norm{\nabla f_N^D}_2^2\leq\frac{8\beta}{N}\frac{\tanh(\beta b)}{b}.
\end{equation}
Furthermore, Jensen's inequality, the removal of the discretization, and
\eqref{eq:orderp} yield
\begin{equation}\label{eq:replica-mean-outer}
 \limsup_{D\to\infty}\mathcal E_{\bh}^{D,\otimes}[f_N^D]
 \leq\left(\left\langle\norm{C_N^{(1,2)}}_{\HS}^2\right\rangle_{J,\bh}^{\otimes}\right)^{1/2}
 \leq\frac{\mathfrak m(\beta,b;\norm{J})}{\sqrt N}.
\end{equation}
The Herbst bound~\eqref{eq:Herbst} consequently guarantees that 
\begin{equation}\label{eq:replica-outer-mgf}
 \limsup_{D\to\infty}\mathcal E_{\bh}^{D,\otimes}e^{tNf_N^D}
 \leq\exp\left(t\mathfrak m\sqrt N+4rNt^2\right).
\end{equation}
Combining \eqref{eq:replica-inner-mgf} and \eqref{eq:replica-outer-mgf}, for
$0<t\leq1/8$, gives
\begin{equation}\label{eq:replica-mgf}
 \left\langle e^{tN\norm{C_N^{(1,2)}}_{\HS}}\right\rangle_{J,\mathbf h}^{\otimes}
 \leq2\exp\left(t\mathfrak m\sqrt N+\frac{Nt^2}{2}(16e+8r)\right).
\end{equation}
Markov's inequality and $t=\min\{\epsilon/\sigma_{\mathrm r}^2,1/8\}$ imply
\eqref{eq:reconc}. This completes the proof of Theorem~\ref{thm:repconc}.
\end{proof}

\appendix

\section{Trotter approximation and the loop representation}\label{app:Trotter}

Let $v:\{-1,1\}\times[0,1]\to\R$ be bounded and Lipschitz in time, and set
$V(s)=v(S^z;s)$.  The time-ordered evolution generated by $V(s)+bS^x$ is
\begin{equation}\label{eq:Dyson}
 T_{t,s}=\one+\sum_{n\geq1}
 \int_{s\leq t_1\leq\cdots\leq t_n\leq t}
 \bigl(V(t_n)+bS^x\bigr) \cdots  \bigl(V(t_1)+bS^x\bigr) \,
 \dd t_1\cdots\dd t_n.
\end{equation}
It solves $\partial_tT_{t,s}=(V(t)+bS^x)T_{t,s}$ and $T_{s,s}=\one$.
The Lie--Trotter product formula \cite{SIC13,ZNI24} gives, in operator norm,
\begin{equation}\label{eq:Trotter-product}
 T_{1,0}=\lim_{M\to\infty}
 \prod_{\alpha=M}^{1}
 \exp\left(\frac1M V(\alpha/M)\right)
 \exp\left(\frac bM S^x\right),
\end{equation}
where the product is ordered from right to left in increasing time.
In the $S^z$-basis,
\begin{equation}\label{eq:Sx-matrix-elements}
 \langle\sigma|e^{aS^x}|\tau\rangle
 =\cosh(a)\one_{\{\sigma=\tau\}}+\sinh(a)\one_{\{\sigma\neq\tau\}}
 =N_a e^{K_a\sigma\tau},
\end{equation}
where
\[
 N_a=\sqrt{\frac{\sinh(2a)}2},
 \qquad K_a=-\frac12\log\tanh(a).
\]
Consequently,
\begin{align}
 \Tr T_{1,0}
 =\lim_{M\to\infty}N_{b/M}^{M}
 \sum_{\substack{\sigma_1,\dots,\sigma_M=\pm1\\\sigma_0=\sigma_M}}
 \exp\Bigg(&K_{b/M}\sum_{\alpha=1}^M\sigma_{\alpha-1}\sigma_\alpha +\frac1M\sum_{\alpha=1}^Mv(\sigma_\alpha;\alpha/M)\Bigg).
 \label{eq:chain-representation}
\end{align}
The chain is ferromagnetic because $K_{b/M}>0$.  Since the $v=0$ expression
is $2\cosh b$, positivity of all matrix elements gives
\begin{equation}\label{eq:trace-bound}
 e^{-\norm v_\infty}
 \leq\frac{\Tr T_{1,0}}{2\cosh b}
 \leq e^{\norm v_\infty}.
\end{equation}
Insertions of $S^z$ converge to insertions of the corresponding chain spins;
for example, for $0\leq s\leq t\leq1$,
\begin{align}
 \frac{\Tr(T_{1,t}S^zT_{t,0})}{\Tr T_{1,0}}
 &=\lim_{M\to\infty}\langle\sigma_{\lfloor Mt\rfloor}\rangle_M,
 \label{eq:Trotter-one-point}\\
 \frac{\Tr(T_{1,t}S^zT_{t,s}S^zT_{s,0})}{\Tr T_{1,0}}
 &=\lim_{M\to\infty}\langle
 \sigma_{\lfloor Mt\rfloor}\sigma_{\lfloor Ms\rfloor}\rangle_M.
 \label{eq:Trotter-two-point}
\end{align}
These limits justify the use of FKG and the Ding--Song--Sun inequality in the
proof of Proposition~\ref{prop:DSS}.

\section{Concentration of the Gibbs-averaged self-overlap}\label{app:sac}
The following complements Theorem~\ref{thm:concSK} for the QSK.
\begin{corollary}
Assume~\eqref{eq:QSK-condition}, let $\bh=\mathbf 0$, and fix $\delta>0$ such that
\[
 b>(4+2\delta)\tanh(\beta b).
\]
Then there is $L=L(\beta,b,\delta)<\infty$ such that, for every $\varepsilon>0$ and every $N$,
\begin{equation}\label{eq:concGauss}
 \Pp\!\left(
  \norm{\langle S_N\rangle_{G,\mathbf 0}-\E\langle S_N\rangle_{G,\mathbf 0}}_{\HS}
  >\varepsilon+\frac{L}{\sqrt N}
 \right)
 \leq\exp\!\left(-\frac{N\varepsilon^2}{L^2}\right)
 +\Pp(\Omega_{N,\delta}^{\mathrm c}).
\end{equation}
\end{corollary}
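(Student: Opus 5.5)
The plan is Gaussian concentration for the map $g\mapsto \Phi(g)\coloneqq\langle S_N\rangle_{G(g),\mathbf 0}\in\mathcal S_2(L^2(0,1))$, viewed as a function of the $N^2$ independent standard Gaussians $g=(g_{jk})$. Since $\Phi$ is only well behaved on $\Omega_{N,\delta}$, we first restrict to this convex set. The set $\{g:\norm{G(g)}\leq 2+\delta\}$ is convex, because $g\mapsto G(g)$ is linear and the operator norm is convex. We then show that $\Phi$ is Lipschitz there, with constant $L'/\sqrt N$ in Hilbert--Schmidt norm and $L'$ depending only on $\beta,b,\delta$. Next we extend the real function $g\mapsto \norm{\Phi(g)-c}_{\HS}$, or more conveniently each linear functional, by a McShane/Kirszbraun extension to all of $\R^{N^2}$ without increasing the Lipschitz constant. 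Gaussian concentration \cite{Ledoux01} for the extension then gives the tail $\exp(-N\varepsilon^2/(2L'^2))$ around its mean. Finally we replace the mean of the extension by $\E\langle S_N\rangle_{G,\mathbf 0}$, which accounts for the term $L/\sqrt N$ together with an error of order $\Pp(\Omega^c_{N,\delta})$ times a bounded quantity ($\norm{S_N}_{\HS}\leq1$). Because that error is exponentially small, it can be absorbed into $L/\sqrt N$ after enlarging $L$.

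To make the Hilbert-space valued statement scalar, we control $\psi(g)=\norm{\Phi(g)-\E_{\mathrm{ext}}\Phi}_{\HS}$ directly. This is Lipschitz with the same constant as $\Phi$, and its mean is bounded by $(\E\norm{\Phi-\E\Phi}^2_{\HS})^{1/2}\leq L'\sqrt{\cdot}/\sqrt N$ through the Gaussian Poincaré inequality applied componentwise in an orthonormal basis. That is exactly where the $L/\sqrt N$ shift comes from.

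The key step is the Lipschitz estimate. Differentiating in $g_{jk}$, Gaussian-linearly entering through $G_{jk}=(g_{jk}+g_{kj})/\sqrt{2N}$, gives
\[
\partial_{g_{jk}}\Phi=\frac{\beta}{2\sqrt{2N}}\Big(\big\langle S_N\,;\,\inner{\xi_j}{\xi_k}\big\rangle+\big\langle S_N\,;\,\inner{\xi_k}{\xi_j}\big\rangle\Big),
\]
a Gibbs covariance. For a unit direction $a=(a_{jk})$, the directional derivative is therefore $\frac{\beta}{\sqrt{2N}}\,\mathrm{Cov}(S_N,Y_a)$ with $Y_a=\sum_{jk}a^{\mathrm{sym}}_{jk}\inner{\xi_j}{\xi_k}$. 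By Cauchy--Schwarz (the scalar case of Proposition~\ref{prop:covariance}), its Hilbert--Schmidt norm is at most $\frac{\beta}{\sqrt{2N}}\big(\langle\norm{S_N-\langle S_N\rangle}^2_{\HS}\rangle\,\Var(Y_a)\big)^{1/2}$. The first factor is at most $(1+r_\delta)/N$ by \eqref{eq:self-variance}, applied with $J=G+(2+\delta)\one$.

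The main obstacle is bounding $\Var(Y_a)$ by $O(N)$ uniformly over unit $a$, since a naive estimate only gives $O(N^2)$. Here $Y_a$ is a quadratic form in the paths: $Y_a=\int_0^1\!\inner{\bxi(s)}{a^{\mathrm{sym}}\bxi(s)}\dd s$. I would handle it as in the replica-overlap argument. Decompose via the outer Gaussian field $\bw$ and the inner product measures $\Av_{\bw}$. Conditionally on $\bw$ the sites are independent, so the inner variance of a quadratic form with HS-norm $1$ coefficient matrix is $O(\norm{a}_{\HS}^2+\norm{a\,u}^2)=O(N)$, using $|\xi_j|\leq1$ and the decoupling bound for quadratic chaos. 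The outer variance is controlled by the Poincaré inequality \eqref{eq:Poincare-scalar} with constant $1/\kappa$: the gradient of $\Av_{\bw}Y_a$ in $w_j$ is $\beta\,\mu_{\widehat w_j}$ applied to $2\sum_k a_{jk}\Av\xi_k$ plus bounded terms. By Proposition~\ref{prop:DSS} its squared norm sums to $O(\norm{\mu_0}^2\norm{a}^2_{\mathrm{op}}N)\leq O(N)$. Altogether the Lipschitz constant is $\frac{\beta}{\sqrt{2N}}\sqrt{(1+r_\delta)/N}\,\sqrt{CN}=L'/\sqrt N$. Since the derivative bound holds on the open convex interior of $\Omega_{N,\delta}$ (using $\delta$ slightly enlarged if needed), this yields the Lipschitz bound on the whole set and completes the argument.
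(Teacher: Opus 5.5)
Your skeleton matches the paper's. You restrict to the convex set $\{\norm{G}\leq2+\delta\}$, write $\partial_{g_{jk}}\Phi$ as a Gibbs covariance, and use Gaussian Poincar\'e for the mean and Gaussian concentration for $\norm{\Phi-c}_{\HS}$. You then recenter at $\E\langle S_N\rangle_{G,\mathbf 0}$ at a cost $2\Pp(\Omega_{N,\delta}^{\mathrm c})$, which is absorbed into $L/\sqrt N$.

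The key estimate is different. The paper uses $\sum_{j,k}\inner{\xi_j^{(1)}}{\xi_k^{(1)}}\inner{\xi_j^{(2)}}{\xi_k^{(2)}}=N^2\norm{R_N^{(1,2)}}_{\HS}^2$ to bound $\sum_{j,k}\norm{\partial_{g_{jk}}\Phi}_{\HS}^2$ by $\beta^2N\langle\norm{\Xi_N}_{\HS}^2\rangle\,(\langle\norm{R_N^{(1,2)}}_{\HS}^4\rangle^{\otimes})^{1/2}$, with $\Xi_N=S_N-\langle S_N\rangle$. It then imports the fourth-moment bound \eqref{eq:replmom} from Theorem~\ref{thm:repconc}, which requires $\bh=\mathbf 0$ so that $R_N^{(1,2)}=C_N^{(1,2)}$.

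You instead prove $\norm{\Sigma_Y}=\sup_{\norm{a}=1}\Var(Y_a)\leq CN$, where $\Sigma_Y$ is the Gibbs covariance of $Y=(\inner{\xi_j}{\xi_k})_{j,k}$. You get this directly from the Hubbard--Stratonovich decomposition, Proposition~\ref{prop:DSS} and the outer Poincar\'e inequality. That argument is sound: the linear chaos is $O(\norm{a}_{\mathrm{op}}^2N)$ because $\sum_k\norm{\Av_{w_k}\xi_k^D}_2^2\leq N$, and the quadratic chaos and diagonal terms are $O(1)$. It bypasses Theorem~\ref{thm:repconc} entirely. It also never uses $\bh=\mathbf 0$, so it would give the same concentration at any fixed $\bh$.

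As written, however, the mean bound does not follow. For each unit direction $a$ you bound $\norm{D\Phi\,a}_{\HS}$, i.e.\ the operator norm of $D\Phi:\R^{N^2}\to\mathcal K=\mathcal S_2(L^2(0,1))$. That gives the Lipschitz constant of $\psi$. The componentwise Poincar\'e inequality, though, yields $\E\norm{\Phi-\E\Phi}_{\HS}^2\leq\E\sum_{j,k}\norm{\partial_{g_{jk}}\Phi}_{\HS}^2$, which is the Hilbert--Schmidt norm of $D\Phi$. In infinite-dimensional $\mathcal K$ this is not controlled by the operator norm. Summing your estimate over the $N^2$ coordinate directions gives only $O(1)$, which is trivial since $\norm{\Phi}_{\HS}\leq1$; this is exactly where your placeholder $L'\sqrt{\cdot}/\sqrt N$ sits.

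The fix is within your own ingredients: apply Cauchy--Schwarz per basis vector of $\mathcal K$ rather than per direction $a$. That is, use the trace form of \eqref{eq:cross} with $U=\Xi_N$, $Y$ centered, and $Ta=\langle\Xi_N\,Y_a\rangle$, so that $D\Phi=\frac{\beta}{\sqrt{2N}}T$:
\[
 \sum_{j,k}\norm{\partial_{g_{jk}}\Phi}_{\HS}^2
 =\frac{\beta^2}{2N}\tr TT^*
 \leq\frac{\beta^2}{2N}\norm{\Sigma_Y}\,\langle\norm{\Xi_N}_{\HS}^2\rangle
 \leq\frac{\beta^2C(1+r_\delta)}{2N}.
\]
This controls the Lipschitz constant and the mean simultaneously.

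For the same reason, extend $\Phi$ by composing it with the metric projection onto $\{g:\norm{G(g)}\leq2+\delta\}$, as the paper does. Since $\norm{D\Pi}\leq1$, this preserves the Hilbert--Schmidt gradient bound. A McShane extension of $\psi$ alone gives no handle on its mean. Separate McShane extensions of the functionals $\inner{e}{\Phi}$ lose the summability over $e$, because $\sum_e\sup_g$ is not controlled by $\sup_g\sum_e$.
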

Since an admissible $\delta$ may be fixed as a function of $(\beta,b)$, the constant may equivalently be denoted by $L(\beta,b)$.
\begin{proof}
We write the independent Gaussian coordinates collectively as $g=(g_{jk})_{1\leq j <k \leq N}$ and study
\[
 F_N(g)\coloneqq\langle S_N\rangle_{G(g),\mathbf 0}
 \in\mathcal K\coloneqq\mathcal S_2(L^2(0,1)),
 \qquad
 \Xi_N\coloneqq S_N-\langle S_N\rangle_{G,\mathbf 0}.
\]
We also abbreviate $
 r_\delta\coloneqq r(\beta,b;4+2\delta) $. 
On $\Omega_{N,\delta}$, differentiation of all Hilbert--Schmidt components and
the normalization~\eqref{def:GOE} give
\begin{align}
 \sum_{j,k}\norm{\frac{\partial F_N}{\partial g_{jk}}}_{\HS}^2
 &\leq\beta^2N\left\langle
  \norm{\Xi_N^{(1)}}_{\HS}\norm{\Xi_N^{(2)}}_{\HS}
  \norm{R_N^{(1,2)}}_{\HS}^2
 \right\rangle_{G,\mathbf 0}^{\otimes}\notag\\
 &\leq\beta^2N\left\langle\norm{\Xi_N}_{\HS}^2\right\rangle_{G,\mathbf 0}
 \left(\left\langle\norm{R_N^{(1,2)}}_{\HS}^4\right\rangle_{G,\mathbf 0}^{\otimes}\right)^{1/2}\notag\\
 &\leq\frac{\beta^2(1+r_\delta)\,C_4(\beta,b;4+2\delta)^{1/2}}{N}
 \eqqcolon\frac{L_0^2}{N}.
 \label{eq:appendix-gradient-good}
\end{align}
At zero longitudinal field the centered and uncentered replica overlaps coincide so that the last line is implied by~\eqref{eq:self-variance} and~\eqref{eq:replmom}. 

The set
$ \Gamma_{N,\delta}\coloneqq\{g:\norm{G(g)}\leq2+\delta\} $ 
is closed and convex in the Euclidean space of independent Gaussian coordinates. Let
$\Pi_{N,\delta}$ be its metric projection and define
\[
 \widehat F_N(g)\coloneqq F_N(\Pi_{N,\delta}g).
\]
The metric projection is nonexpansive and differentiable almost everywhere, with
$\norm{D\Pi_{N,\delta}}\leq1$. Hence~\eqref{eq:appendix-gradient-good} and the chain rule imply the global bound almost everywhere:
\begin{equation}\label{eq:appendix-gradient-global}
 \sum_{j,k}\norm{\frac{\partial\widehat F_N}{\partial g_{jk}}}_{\HS}^2
 \leq\frac{L_0^2}{N} .
\end{equation}
Moreover, $\widehat F_N=F_N$ on $\Omega_{N,\delta}$ and
$\norm{F_N}_{\HS},\|\widehat F_N\|_{\HS}\leq1$.

Applying the Gaussian Poincar\'e inequality componentwise in an orthonormal basis of $\mathcal K$ gives
\begin{equation}\label{eq:appendix-poincare}
 \E\norm{\widehat F_N-\E\widehat F_N}_{\HS}^2\leq\frac{L_0^2}{N}.
\end{equation}
The scalar function $
 g\longmapsto\norm{\widehat F_N(g)-\E\widehat F_N}_{\HS} $
is $L_0/\sqrt N$-Lipschitz by~\eqref{eq:appendix-gradient-global}. Gaussian concentration and
\eqref{eq:appendix-poincare} therefore yield the bound
\begin{equation}\label{eq:appendix-concentration-projected}
 \Pp\left(
  \norm{\widehat F_N-\E\widehat F_N}_{\HS}
  >\varepsilon+\frac{L_0}{\sqrt N}
 \right)
 \leq\exp\left(-\frac{N\varepsilon^2}{2L_0^2}\right).
\end{equation}

Since $F_N=\widehat F_N$ on $\Omega_{N,\delta}$ and both maps have Hilbert-Schmidt norm at most one,
$
 \norm{\E F_N-\E\widehat F_N}_{\HS}
 \leq 2  \ \Pp(\Omega_{N,\delta}^{\mathrm c}) 
$, 
such that
\begin{equation}
 \Pp\left(
  \norm{F_N-\E F_N}_{\HS}
  >\varepsilon+\frac{L_0}{\sqrt N}
   +2 \ \Pp(\Omega_{N,\delta}^{\mathrm c})
 \right) \leq
 \exp\!\left(-\frac{N\varepsilon^2}{2L_0^2}\right)
 +\Pp(\Omega_{N,\delta}^{\mathrm c}).
 \label{eq:appendix-comparison}
\end{equation}
By the exponential tail estimate for $\Omega_{N,\delta}^{c}$ after~\eqref{def:omega}, we also have $
 a_\delta\coloneqq\sup_{N\geq1}\sqrt N\,\Pp(\Omega_{N,\delta}^{\mathrm c})<\infty $. 
The proof is completed by choosing $
 L\geq\max\{\sqrt2L_0,L_0+2a_\delta\} $, 
which turns~\eqref{eq:appendix-comparison} into~\eqref{eq:concGauss}.
\end{proof}


\paragraph{Acknowledgements.}
The authors acknowledge the use of AI tools for parts of the manuscript. All mathematical arguments and proofs in the final
manuscript were written by the authors.\\
This work was funded by the Deutsche Forschungsgemeinschaft (DFG, German
Research Foundation), grant 558731723 (CM), and under Germany's Excellence
Strategy, EXC-2111--390814868 (SW).

\paragraph{Conflicts of interest.}
The authors declare no competing interests relevant to this article.



\end{document}